\documentclass[12pt,journal,draftclsnofoot,onecolumn]{IEEEtran}
\usepackage{amsmath,amssymb,graphicx,epsfig,amsthm}

\newtheorem{definition}{Definition}
\newtheorem{proposition}{Proposition}
\usepackage{epstopdf}
\usepackage{balance}
\usepackage{multirow}
\usepackage{mathtools}
\usepackage{algorithm}
\usepackage{algpseudocode}
\usepackage{tabu}
\usepackage{subcaption}
\usepackage{cite}

% * <nsharma9@buffalo.edu> 2016-09-13T20:10:16.631Z:
%
% ^.
%\usepackage[retainorgcmds]{IEEEtrantools}
% Example definitions.
% --------------------

\DeclareMathOperator*{\argmin}{arg\,min}
\DeclareMathOperator*{\argmax}{arg\,max}

\DeclarePairedDelimiter\floor{\lfloor}{\rfloor}

\newcommand{\pds}[1]{\ensuremath{\widetilde{#1}}}

\allowdisplaybreaks

\title{\LARGE Accelerated Structure-Aware Reinforcement Learning for \\ Delay-Sensitive Energy Harvesting Wireless Sensors}

\begin{document}

\author{Nikhilesh Sharma, Nicholas Mastronarde, and Jacob Chakareski}

\maketitle

% * <nhmastro@gmail.com> 2017-04-07T21:17:49.202Z:
%
% ^.

\begin{abstract}
%We investigate how an energy-harvesting wireless sensor can learn
%We investigate the transmission scheduling problem on an energy-harvesting wireless sensor transmitting latency-sensitive imagery data over a fading channel.
We investigate an energy-harvesting wireless sensor transmitting latency-sensitive  data over a fading channel.
% The sensor relies on ambient energy harvested from the environment; therefore, energy-efficiency is paramount.
The sensor injects captured data packets into its transmission queue and relies on ambient energy harvested from the environment to transmit them. 
We aim to find the optimal scheduling policy that decides whether or not to transmit the queue's head-of-line packet at each transmission opportunity such that the expected packet queuing delay is minimized given the available harvested energy.
%No prior knowledge of the channel, captured data and harvested energy stochastic arrival processes is assumed, 
No prior knowledge of the stochastic processes that govern the channel, captured data, or harvested energy dynamics are assumed,
%making identifying the optimal policy very challenging and necessitating online learning methods to investigate it.
thereby necessitating the use of online learning to optimize the scheduling policy.
We formulate this scheduling problem as a Markov decision process (MDP) and analyze the structural properties of its optimal value function. In particular, we show that it is non-decreasing and has increasing differences in the queue backlog and that it is non-increasing and has increasing differences in the battery state. %We exploit this structure to formulate a novel reinforcement learning (RL) algorithm for solving the scheduling problem online and considerably accelerating the learning process, while limiting the induced computational complexity. 
We exploit this structure to formulate a novel accelerated reinforcement learning (RL) algorithm to solve the scheduling problem online at a much faster learning rate, while limiting the induced computational complexity. 
%Thus, when such a sensor is placed in an environment, it can learn its optimal scheduling policy online despite the fact that its traffic load, experienced channel conditions, and the available harvested energy are unknown a priori. 
Our experiments demonstrate that the proposed algorithm closely approximates the performance of an optimal offline solution that requires a priori knowledge of the channel, captured data, and harvested energy dynamics. Simultaneously, by leveraging the value function's structure, our approach achieves competitive performance relative to a state-of-the-art RL algorithm, at potentially orders of magnitude lower complexity. Finally, considerable performance gains are demonstrated over the well-known and widely used Q-learning algorithm.
%, simultaneously considerably outperforming another state-of-the-art low-complexity learning method, Q-learning.
\end{abstract}

\begin{IEEEkeywords}
Energy harvesting, delay-sensitive remote sensing, scheduling, reinforcement learning.
\end{IEEEkeywords}

% \begin{keywords}
% Markov decision process, energy harvesting, reinforcement learning, latency-sensitive remote sensing
% \end{keywords}

\section{Introduction}
\label{sec:intro}
Energy-constrained wireless sensors are increasingly used for latency-sensitive applications such as real-time remote visual sensing~\cite{Chakareski:15, Chakareski:11g}, the Internet of Things (IoT), body sensor networks~\cite{seyedi2010energy}, smart grid monitoring, and cyber-physical systems \cite{Chakareski:17, Chakareski:17a}. However, these sensors are subject to time-varying channel conditions and generate stochastic traffic loads
-- arising due to the compression algorithms that nodes apply to the sensed data before transmitting it~\cite{zordandesign} and due to the event-driven nature of many sensor network applications~\cite{seyedi2010energy, kansal2007power} --
which makes it very challenging for them 
%to provide the necessary Quality of Service (QoS) 
to support latency-sensitive applications.
This is further complicated by the introduction of wireless sensors powered by energy harvested from the environment (e.g., ambient light, vibration/motion, or RF energy~\cite{vullers2010energy}). Although energy harvesting  sensors (EHSs) can operate autonomously in (remote) areas without access to power lines and without the need to change their batteries, the stochastic nature of harvested energy sources poses further challenges in sensor power management, transmission power allocation, and transmission scheduling due to the uncertainty in the amount of energy available for communication. Therein arises a need to study the behavior of scheduling policies employed by these sensors.

A lot of related work focuses on offline computation of optimal transmission policies for EHSs~\cite{gurakanenergy,lu2014dynamic, sharma2010optimal, gunduz2014designing}. 
For example, Gurakan and Ulukus~\cite{gurakanenergy} consider a multiaccess channel with two EHSs. Assuming that both energy and traffic arrive intermittently over time, and that their arrival processes are known a priori, they derive the optimal offline transmission power and rate allocations that maximize a sum rate objective function. Lu et al.~\cite{lu2014dynamic} formulate a throughput-optimal channel selection policy for EHSs operating as secondary users in a cognitive radio network. Gunduz et al.~\cite{gunduz2014designing} identify Markov decision processes (MDPs~\cite{puterman2014markov}) as a useful tool for optimizing EHSs in unpredictable environments with only causal information about the past and present, and statistical information about the future dynamics. Sharma et al.~\cite{sharma2010optimal} formulate both throughput-optimal and delay-optimal energy management policies as MDPs.
While these studies identify numerous techniques for calculating optimal transmission policies for EHSs offline, they do not provide analytical insights into the problems being studied and their structure.

Complementing the aforementioned research, another important body of work focuses on characterizing the structure of optimal transmission policies for EHSs~\cite{seyedi2010energy, ozel2011transmission, ho2012optimal, yang2012optimal1, yang2012optimal, michelusi2012optimal, aprem2013transmit, ho2010optimal}. For example, numerous studies have shown that optimal power allocation policies for EHSs have various water-filling structures~\cite{ozel2011transmission, ho2012optimal, yang2012optimal1}.
Ozel et al.~\cite{ozel2011transmission} consider two related problems: (i) maximizing the number of bits transmitted by a deadline and (ii) minimizing the time to transmit a certain number of bits. They identify that the transmission power over time that optimizes the first objective has a directional water-filling structure.
Ho and Zhang~\cite{ho2012optimal} consider the problem of throughput-optimal power allocation over a finite horizon. If unlimited energy can be stored in the battery and full state information is available about past, present, and future slots, they prove that the optimal energy allocation solution is based on water-filling, where the water levels follow a staircase function. Yang and Ulukus~\cite{yang2012optimal1} consider a two-user multiple access channel. Their goal is to minimize the required time by which all packets from both users are transmitted, by controlling the users' transmission powers and rates. Under the assumption that the energy harvesting times and amounts are known a priori, they prove that the optimal power allocation policy can be found by backward water-filling. 

Other types of structural results for EHSs are shown in~\cite{yang2012optimal, michelusi2012optimal, aprem2013transmit, zordandesign}.
For example, Yang and Ulukus~\cite{yang2012optimal} aim to adapt the transmission rate according to the traffic load and available energy, such that the time by which all packets are delivered is minimized. Assuming prior knowledge of the data and energy arrivals, they show that the optimal transmission rates increase in time.
Michelusi et al.~\cite{michelusi2012optimal} formulate the problem of maximizing the average importance of transmitted data as an MDP. They show that the EHS should only transmit data having an importance value above a certain threshold, which is a strictly decreasing function of the energy level.
Aprem et al.~\cite{aprem2013transmit} formulate outage optimal power control policies for EHSs. For the special case of binary power levels, they show that the optimal policy for the underlying MDP represents a threshold in the battery state. 
Zordan et al.~\cite{zordandesign} formulate optimal lossy compression policies for EHSs using constrained MDPs. They demonstrate that the optimal compression policy is non-decreasing in the battery, channel, and energy source states.

% NHM: Maybe just cite a few Q-learning papers in various areas, similar to NSF proposal
In practical scenarios, however, the stochastic processes governing the channel, captured data, and harvested energy dynamics are {\em unknown a priori}. This necessitates {\em online learning of transmission policies} to adapt on-the-fly to the experienced dynamics. In this context, reinforcement learning (RL),~\cite{sutton1998reinforcement,mastronarde2013joint}, has become an extremely useful tool. For instance, in \cite{blasco2013learning}, Blasco et al. propose the use of Q-learning~\cite{watkins1992q} (the most widely used RL technique) to maximize the throughput of an energy harvesting transmitter that cannot store the data in a buffer, i.e., the data is either transmitted in the time slot following its arrival or it is dropped. 
%In \cite{liang2008reinforcement}, Liang et al. propose a Q-learning based routing protocol to satisfy a set of QoS requirements. They consider the combination of a 1-hop delay and the fraction of correctly received ACK packets as the reward in their Q-learning framework. 
While Q-learning can solve problems with small action/state spaces, it exhibits very poor convergence rates. This makes it inappropriate for problems with large state spaces or tight timing constraints, such as the one we consider here. 

Other RL frameworks, e.g., SARSA, Bayesian RL, actor-critic learning~\cite{konda2000actor}, have also been very popular in the literature. Ortiz et al. \cite{ortiz2016reinforcement} use an approximate SARSA algorithm with linear function approximation in a point-to-point energy harvesting system with a finite battery to find a power allocation policy that aims at maximizing throughput. In \cite{xiao2015bayesian}, the authors propose a Bayesian RL approach in an energy harvesting system to decide the transmit power and the number of transmit data packets to maximize the long-term expected reward. In \cite{pandana2005near}, Pandana and Liu use an actor-critic algorithm with softmax action selection to compute an online policy that maximizes the average throughput subject to a total energy constraint, whereas, in \cite{liu2006rl}, Liu and Itamar propose an actor-critic based adaptive MAC protocol with $\epsilon$-action selection, where the nodes actively infer the state of other nodes using the RL based control mechanism. %The choice of action selection method also affects the complexity of the algorithm. 
%Apart from actor-critic, in \cite{wang2006adaptive}, Wang and Ting propose an adaptive routing algorithm based on least squares policy iteration (LSPI) to learn an optimal routing strategy in wireless sensor networks, based on multiple optimization goals. 

%The above mentioned studies of RL are penalized by the same oversight discussed earlier, i.e., they do not consider the underlying structure of the problems studied. 
While the aforementioned work makes great progress towards demonstrating the utility of RL in communication systems, it solely considers data-driven RL algorithms that do not incorporate useful information from the underlying system model.
Exploiting such knowledge about the nature of the available actions (scheduling, routing, etc.), the system's dynamics (packet losses, queuing behavior, etc.), and the system's cost structure (energy, delay, etc.) can significantly increase the learning rate, decrease the complexity, and reduce the memory requirements of RL algorithms, thereby making them suitable for EHSs. We pursue this approach herein.
%This is the objective we pursue in the present paper.
%Incorporating structural analysis to construct an online learning framework can assist in improving the speed and/or complexity of the learning framework, thus making it more practical for applications in energy harvesting wireless sensors.

In particular, we exploit the structure of the problem at hand to investigate a novel accelerated RL framework based on value function approximation, which allows EHSs to learn near-optimal transmission policies online at a fast learning rate, while limiting the induced computational complexity.
%, to minimize the packet queuing delay given the available harvested energy. %, taking into account the stochastic energy and traffic arrival processes, and varying channel conditions. 
Our specific contributions are as follows:
\begin{itemize}
\item We formulate the delay-sensitive energy harvesting scheduling (DSEHS) problem as an MDP that takes into account the stochastic captured data traffic loads, harvested energy, and channel dynamics. We propose an RL-based approach to solve it online without a priori knowledge of these dynamics.
%We assume no prior knowledge of the time-varying channel statistics and the captured data and harvested energy arrival processes, which necessitates the investigation of online learning based solution strategies.
\item We leverage so-called post-decision states (PDS) and virtual experience (VE) to accelerate the learning process. The former capture the system state once an action is taken, but before the unknown dynamics take place. The latter allows us to update the value function at multiple states in each time slot.
%We analyze in detail the online learning concepts of post-decision states (PDS) and virtual experience (VE), and their ability to accelerate the learning process.
\item We show that the optimal value is non-decreasing and has increasing differences in the buffer state and that it is non-increasing and has increasing differences in the battery state. 
%that it has increasing differences in the queue backlog and battery states, and that it is submodular in the buffer and battery states.
\item Based on these structural properties, we formulate a low-complexity structure-aware accelerated RL algorithm to solve the DSEHS problem. We demonstrate its ability to closely approximate the performance of an optimal offline policy calculated with a priori knowledge of the experienced dynamics. Simultaneously, we demonstrate that our approach achieves competitive performance to the state-of-the-art VE learning algorithm~\cite{mastronarde2013joint}, at potentially orders of magnitude lower computational complexity, and considerable performance gains over the well-known Q-learning algorithm.
\end{itemize}

The rest of the paper is organized as follows. We introduce our system model in Section~\ref{sec:sys}. 
We formulate the DSEHS problem in Section~\ref{sec:formulation}. We introduce our RL framework in Section~\ref{sec:learning}. We analyze the structural properties of the DSEHS problem in Section~\ref{sec:structural_properties} and formulate the proposed structure-aware accelerated RL algorithm in Section~\ref{sec:2d-grid-update}. We present our simulation results in Section~\ref{sec:sim} 
%, comparing our approach to an optimal offline policy that requires full a priori knowledge of the stochastic data/energy sources and communication channel, a state-of-the-art RL algorithm, and the popular Q-learning algorithm. 
and conclude in Section~\ref{sec:conclusion}.

\section{Delay-Sensitive Energy-Harvesting Wireless Sensor Model}
\label{sec:sys}

We consider a time-slotted single-input single-output (SISO) point-to-point wireless communication system in which an energy harvesting sensor transmits latency-sensitive  data over a fading channel. The system model is depicted in Fig. 1. The system comprises two buffers: a packet buffer with size $N_b$ and an energy buffer (battery) with size $N_e$, where $N_b$ and $N_e$ are possibly infinite. We assume that time is divided into slots with length $\Delta T$ (seconds) and that the system's state in the $n$th time slot is denoted by $s^{n}\triangleq(b^{n},e^{n},h^{n})\in \mathcal{S}$, where $b^{n}\in \mathcal{S}_b=\left\{ 0,1,...,N_b\right\}$ is the packet buffer state (i.e., the number of backlogged data packets), $e^{n}\in \mathcal{S}_e=\left\{ 0,1,...,N_e\right\}$ is the battery state (i.e., the number of energy packets in the battery), and $h^{n} \in \mathcal{S}_h$ is the channel fading state. At the start of the $n$th time slot, the optimizer observes the state of the system and takes the binary scheduling action $a^{n} \in \mathcal{A} = \{0,1\}$, where $a^{n} = 1$ indicates that it transmits the head-of-line packet in the queue and $a^{n} = 0$ otherwise.

%\vspace{-6pt}
%\subsection{Channel model}
\textbf{Channel model:} We assume a block-fading channel that is constant during each time slot and may change from one slot to the next. Similar to earlier work \cite{salodkar2008line,mastronarde2013joint,zhang1999finite,ngo2010monotonicity,zordandesign}, we assume that the channel fading coefficient $h^{n}\in \mathcal{S}_h$ is known to the transmitter at the start of each time slot, that $\mathcal{S}_h$ denotes a finite set of $N_h$ channel states, and that the evolution of the channel state can be modeled as a finite state Markov chain with transition probability function $P^{h}(h^\prime|h)$.

\begin{figure}[!htb]
\centering
  \includegraphics[width=3.45in]{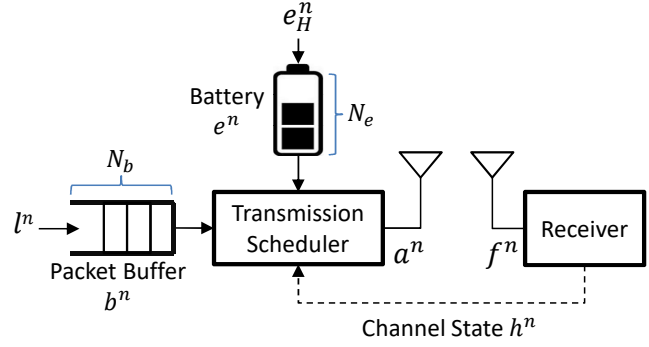}
  \caption{System block diagram.}
  \label{fig:1}
  \vspace{-0.5cm}
\end{figure}

%\vspace{-6pt}
%\subsection{Energy harvesting model}
\textbf{Energy harvesting model:} 
Similar to \cite{blasco2013learning}, we assume that battery energy is stored in the form of energy packets.
%of size $L_{e}$ (Joules).
Let $e_{H}^{n} \in \mathcal{E} =\{0,1,\ldots,N_e\}$ denote the number of energy packets that are available for harvesting in the $n$th time slot and let $P^{e_H}(e_{H})$ denote the energy packet arrival distribution. Energy packets that arrive in time slot $n$ can be used in future time slots. Therefore, the battery state at the start of time slot $n+1$ can be found through the following recursion:
\begin{equation}\label{eq:e-recursion}
e^{n+1}=\min(e^{n}-e_{\text{TX}}(a^n)+e_{H}^{n},N_e),
\end{equation}
\noindent
where $e_{\text{TX}}(a^n)$ denotes the number of energy packets consumed in time slot $n$ given the scheduling action $a^n$. We assume that the wireless sensor uses a fixed transmission power $P_\text{TX}$ (energy packets per second); therefore,
\begin{equation}
	e_{\text{TX}}(a^n) = a^n P_\text{TX} \Delta T  = a^n e_{\text{TX}}\mbox{ (energy packets)}.
\end{equation}
For simplicity, we assume that the transmission energy $e_{\text{TX}}$ is an integer multiple of energy packets. Note that we only allow transmission actions $a^n$ such that $e_{\text{TX}}(a^{n}) \leq e^{n}$.

Given the energy packet arrival distribution $P^{e_H}(e_{H})$, the current state $s = (b,e,h)$, and the action $a$, the probability of observing battery state $e^\prime$ in the next time slot can be calculated as:
\begin{equation}\label{eq:battery-tpf}
	P^{e}(e^\prime|e,a) =
    	\sum_{e_H \in \mathcal{E}} \mathbb{I}_{\{e^\prime = \min(e-e_{\text{TX}}(a)+e_{H},N_e)\}}P^{e_H}(e_H),
\end{equation}
where $\mathbb{I}_{\{\cdot\}}$ is an indicator variable that is set to 1 when ${\{\cdot\}}$ is true and is set to 0 otherwise.

% \begin{align}
% P^{e}(e^\prime|[b,e,h],a)&=\begin{cases}
% P^{H}(e^\prime-[e-e_{\text{tot}}]), & \mbox{if $e^\prime<E_{\max}$}\\
% \sum_{e_{H}=E_{\max}-[e-e_{\text{tot}}]}^{\infty}P^{H}(e_{H}), & \mbox{if $e^\prime=E_{\max}$.}
% \end{cases}\label{eq:5}
% \end{align}

%Note that we assume the packet and energy arrival processes are i.i.d. across time and are independent of each other. 
%Moreover, we constrain the optimizer to only select $a^{n}$ such that $a^{n} \leq b^{n}$ and $e_{\text{TX}}(a^{n}) \leq e^{n}$. 
%Note also that $z$ and $f$ can only take nonzero values if $x=\mbox{on}$ and $y={\mbox{s\_on}}$.
%In the following section, we formulate our problem as the maximization of available energy with a constraint on packet buffer (hereafter buffer) cost.

%\vspace{-6pt}
%\subsection{Traffic model}
\textbf{Traffic model:}
Let $l^{n} \in \mathcal{L} = \{0,1,\ldots,M_l\}$ denote the number of data packets generated by the sensor in the $n$th time slot and let $P^l(l)$ denote the data packet arrival distribution. The buffer state at the start of time slot $n+1$ can be found through the following recursion:
\begin{equation}\label{eq:b-recursion}
b^{n+1}=\min(b^{n}-f^{n}(a^n,h^n)+l^{n},N_b),
\end{equation}
\noindent
where $f^{n}(a^n,h^n)$ is the number of packets transmitted successfully in time slot $n$ and $f^{n}(a^n,h^n) \leq a^{n} \leq b^n$. Note that new packet arrivals, and packets that are not successfully received, must be (re)transmitted in a future time slot. Assuming independent and identically distributed (i.i.d.) bit errors, we can characterize $f^n$ as a Bernoulli random variable with probability mass function
$P^{f}(f|a,h)$, where $P^{f}(f|0, h) = Bern(1)$ and $P^{f}(f|1, h) = Bern(q(h))$, 
% \begin{equation}\label{eq:goodput-pmf}
% 	P^f(f|a,h) =
%     \begin{cases}
%     	1, & \mbox{if $f = 0$ and $a = 0$}, \\
%         0, & \mbox{if $f = 1$ and $a = 0$}, \\
%         q(h), & \mbox{if $f = 0$ and $a = 1$}, \\
%         1-q(h), & \mbox{if $f = 1$ and $a = 1$},
%     \end{cases}
% \end{equation}
where $q(h)$ is the packet loss rate (PLR) in channel state $h$. Since the transmission power is fixed, we have $q(h^+) < q(h^-)$ if $h^+ > h^-$.
%$f \in \{0,1\}$ with probability $P^{f}(f|z=1,h)=q(h) \in \{PLR,1-PLR\}$, respectively. Here, $PLR=1-(1-BER)^L$ represents the packet loss probability and $L$ is the packet size in bits. 
We will refer to $P^f(f|a,h)$ as the goodput distribution. 

Given $P^{f}(f|a,h)$, the arrival distribution $P^l (l)$, the current state $s = (b,e,h)$, and the action $a$, the probability of observing buffer state $b^\prime$ in the next time slot can be calculated as follows:
\begin{equation}\label{eq:buffer-tpf}
	P^{b}(b^\prime|[b,h],a) =
    	\sum_{f \in \{0,1\}} \sum_{l \in \mathcal{L}}
        \mathbb{I}_{\{b^\prime = \min(b-f+l,N_b)\}}
        P^{f}(f|a,h)P^{l}(l),
\end{equation}
%
% \begin{equation}
%   P^{b}(b^\prime|[b,e,h],a) = 
%   \begin{cases}
%   	\sum_{f=0}^{z}P^{l}(b^\prime-[b-f])P^{f}(f|z) & \mbox{if $b^\prime<B_{\max}$}\\
%   	\sum_{f=0}^{z}\sum_{l=B_{max}-[b-f]}^{\infty}P^{l}(l)P^{f}(f|z) & \mbox{if $b^\prime=B_{\max}$}
%   \end{cases}\label{eq:2}
% \end{equation}
\section{The Delay-Sensitive Energy-Harvesting Scheduling (DSEHS) Problem}
\label{sec:formulation}

Let $\pi : \mathcal{S} \rightarrow \mathcal{A}$ denote a policy that maps states to actions. The objective of the delay-sensitive energy-harvesting scheduling (DSEHS) problem is to determine the optimal policy $\pi^*$ that minimizes the average packet queuing delay given the available energy. However, this does not mean that the policy should greedily transmit packets whenever there is enough energy to do so. On the contrary, it may be beneficial to abstain from transmitting packets in bad channel states and wait to transmit them in good channel states to reduce costly retransmissions and avoid wasting scarce harvested energy. At the same time, the policy should not be too conservative. For instance, if the battery is (nearly) full, transmitting a packet will make room for more harvested energy, which otherwise would have been lost due to the finite battery size.
To balance these considerations, we formulate the scheduling problem as an MDP~\cite{puterman2014markov}.

We define a \textit{buffer cost} to penalize large queue backlogs. Formally, we define the buffer cost as the sum of the \textit{holding cost} and the expected \textit{overflow cost} with respect to the arrival and goodput distributions, i.e., 
\begin{equation}\label{eq:cost}
  c([b,h],a)= \underbrace{b}_\text{holding cost} + \sum_{f \in \{0,1\}}\sum_{l \in \mathcal{L}}P^{l}(l)P^{f}(f|a,h) \biggl\{\underbrace{\eta\max(b-f+l-N_b,0)}_\text{overflow cost}\biggr\},
\end{equation}
The holding cost is nothing more than the buffer backlog, proportional to the queuing delay by Little's theorem~\cite{bertsekas1987data}. The overflow cost imposes a large penalty $\eta$ for each dropped packet. 
%Note that, if there is not enough energy to schedule a packet, i.e., $e_\text{TX}(a) > e$, then $f(a,h)=0$. 
%Stated formally, the objective of the multi-user scheduling problem is to determine the users’ modulation schemes in each time slot in order to solve the following optimization:

Stated formally, the DSEHS problem's objective is to determine the scheduling policy that solves the following optimization:
\begin{equation}
\begin{aligned}
& \underset{\pi \in \Pi}{\text{minimize}}
& &  \mathbb{E}\left[\sum\nolimits_{n=0}^{\infty}(\gamma)^{n}c(s^{n},\pi(s^{n}))\right],\\
\end{aligned}\label{eq:discounted_cost}
\end{equation}
where $\gamma \in [0, 1)$ is the discount factor, $\Pi$ is the set of all possible policies, and the expectation is taken over the sequence of states, which are governed by a controlled Markov chain with transition probabilities:
\begin{align}
P(s^\prime|s,a)=P^{b}(b^\prime|[b,h],a) \times P^{h}(h^\prime|h) \times P^{e}(e^\prime|e,a).\label{eq:tpf}
\end{align}
The optimal solution to~\eqref{eq:discounted_cost} satisfies the following Bellman equation, $\forall s \in \mathcal{S}$:
\begin{align}
V^{*}(s) &=& \min_{a \in \mathcal{A}(s)} & \biggl\{c(s, a)+\gamma\sum_{s^\prime \in \mathcal{S}}P(s^\prime | s, a)V^{*}(s^\prime)\biggr\},\label{eq:value} \\
&=& \min_{a\in \mathcal{A}(b, e)} & \biggl\{ c([b, h], a) + \gamma\sum_{l \in \mathcal{L}}\sum_{f \in \{0, 1\}} \sum_{e_H \in \mathcal{E}} \sum_{h^\prime \in \mathcal{S}_h} P^{l}(l) P^{f}(f | a, h) P^{e_H}(e_H)P^h(h^\prime | h)  \nonumber \\
&&& \quad  V^{*}([\min(b - f + l, N_b), \min(e - a \cdot e_\text{TX} + e_H, N_e), h^\prime]) \biggr\} \nonumber \\
&=& \min_{a \in \mathcal{A}(s)} & Q^{*}(s, a), \nonumber 
\end{align}

\noindent where,
\begin{equation}\label{eq:action-set}
  \mathcal{A}(b, e) = 
  \begin{cases}
    \{0, 1\}, & \mbox{if $b > 0$ and $e \geq e_{TX}$} \\
    \{0\}, & \mbox{otherwise,}
  \end{cases}
\end{equation}

\noindent is the set of feasible actions given the buffer and battery states, $V^{*}(s)$ is the optimal \textit{state-value function}, and $Q^{*}(s,a)$ is the optimal \textit{action-value function}. Then, the optimal policy $\pi^{*}(s)$ can be determined by taking the action in each state that minimizes the right-hand side of~\eqref{eq:value}.

% NHM: Revise this paragraph
Since the channel, energy arrival, and traffic arrival dynamics are unknown a priori, the optimal policy must be found using an online algorithm. Existing online approaches in energy harvesting systems typically rely on Q-learning~\cite{blasco2013learning}.
%or Lyapunov stability~\cite{mao2012near}. 
However, Q-learning exhibits extremely slow convergence rates for problems with many states and actions.
%and Lyapunov stability based approaches are known to be suboptimal for small queue backlogs (both data and energy). 
In our prior work \cite{mastronarde2013joint}, we proposed a fast RL algorithm that achieves three orders of magnitude faster convergence rates than Q-learning. However, its complexity is too high for EHSs. In Section~\ref{sec:approximate_RL}, we adapt the solution in~\cite{mastronarde2013joint} to create a fast and low-complexity RL algorithm based on value function approximation, which is better suited for EHSs. However, before we present the new algorithm, we must review the RL framework developed in~\cite{mastronarde2013joint}.

\section{Reinforcement Learning Framework}
\label{sec:learning}

In this section, we introduce fundamental RL concepts that we build on in Section~\ref{sec:2d-grid-update}. In Section~\ref{sec:pds-dp}, we review the concept of a post-decision state (PDS). In Section~\ref{sec:pds-learning}, we describe the PDS learning algorithm, which learns a value function defined over the PDSs. In Section~\ref{sec:convergence}, we prove that the PDS learning algorithm converges. In Section~\ref{sec:ve-learning}, we introduce the concept of virtual experience. 

\vspace{-6pt}
\subsection{Post-Decision State Based Dynamic Programming}\label{sec:pds-dp}

A PDS, denoted by $\pds{s}\triangleq(\pds{b},\pds{e},\pds{h})\in \mathcal{S}$, is a state of the system after all known dynamics have occurred, but before the unknown dynamics occur \cite{salodkar2008line,mastronarde2013joint,sutton1998reinforcement}. In the DSEHS problem, 
\begin{equation}\label{eq:pds}
\pds{s}^{n}=(\pds{b}^n,\pds{e}^n,\pds{h}^n)=([b^{n}-f^{n}],[e^{n}-a^{n} \cdot e_{\text{TX}}],h^{n})
\end{equation}
is the PDS in time slot $n$. 
The buffer's PDS $\pds{b}^n=b^{n}-f^{n}$ characterizes the buffer state after a packet is transmitted (if any), but
before any new packets arrive; the battery's PDS $\pds{e}^n=e^{n}-a^{n} \cdot e_{\text{TX}}$ characterizes the battery state after an energy packet is consumed (if any), but before any new energy packets arrive; and the channel's PDS $\pds{h}^{n} = h^{n}$ is the same as the channel state at time $n$. In other words, the PDS incorporates all of the known information about the transition from state $s^n$ to state $s^{n+1}$ after taking action $a^n$. Meanwhile, the unknown dynamics in the transition from state $s^{n}$ to $s^{n+1}$, i.e., the channel state transition from $h^n$ to $h^{n+1} \sim P^h(\cdot|h^{n})$, the data packet arrivals $l^{n} \sim P^l(\cdot)$, and the energy packet arrivals $e_{H}^{n} \sim P^{e_H}(\cdot)$ are not included in the PDS. The next state can be expressed in terms of the PDS as follows:
\begin{equation}
s^{n + 1} = (b^{n + 1}, e^{n + 1}, h^{n + 1}) = \left(\min(\pds{b}^n + l^n, N_b), \min(\pds{e}^n + e_H^n, N_e), h^{n + 1}\right).
\end{equation}

Just as we defined a value function over the conventional states, we can define a PDS value function over the PDSs. Let $\pds{V}^{*}$ denote the optimal PDS value function. $\pds{V}^{*}$ and the optimal value function $V^{*}$ are related by the following Bellman equations:
\begin{align}
\pds{V}^{*}(\pds{s}) =&~ \eta \sum\nolimits_{l \in \mathcal{L}}P^{l}(l)\max(\pds{b} + l - N_b, 0) + \nonumber \\
 &~ \gamma \sum_{l \in \mathcal{L}} \sum_{e_H \in \mathcal{E}} \sum_{h^\prime \in \mathcal{S}_h} P^{l}(l)P^{e_H}(e_H)P^h(h^\prime | h) V^{*}([\min(\pds{b} + l, N_b), \min(\pds{e} + e_H, N_e), h^\prime]) \label{eq:V_to_PDSV} \\
%\end{align}
%
%\begin{equation}
V^{*}(s) =&~ \min_{a \in \mathcal{A}(b, e)} \left\{b + \sum\nolimits_{f = 0}^{a} P^{f}(f | a, h)\pds{V}^{*}(b - f, e - a \cdot e_{TX}, h)\right\} \label{eq:PDSV_to_V}.
\end{align}
% NHM: We don't need the more complicated expression because we are not presenting it in a proof
% \begin{align}
% V^{*}(s) &=&~ \min_{a \in \mathcal{A}(b, e)} & \left\{b + \sum\nolimits_{f = 0}^{a} P^{f}(f | a, h)\pds{V}^{*}(b - f, e - a \cdot e_{TX}, h)\right\} \nonumber \\
% %
% &=&~ \min_{a \in \mathcal{A}(b, e)} & \left\{ b + (1 - a) \pds{V}^*([b, e, h]) + \right. \nonumber \\
%  &&&~ \left. a q(h) \pds{V}^*([b, e - e_{TX}, h]) + a (1 - q(h)) \pds{V}^*([b - 1, e - e_{TX}, h])  \right\} \label{eq:PDSV_to_V}
% \end{align}
Knowing $\pds{V}^{*}(\pds{s})$, the optimal policy $\pi^{*}(s)$ can be found by taking the action in each state that minimizes the right-hand side of \eqref{eq:PDSV_to_V}.

\vspace{-6pt}
\subsection{Post-Decision State Learning}\label{sec:pds-learning}
PDS learning is a stochastic iterative algorithm for learning the PDS value function $\pds{V}^{*}(\pds{s})$ without prior knowledge of the data packet arrival distribution $P^l(l)$, energy packet arrival distribution $P^{e_H}(e_H)$, and channel transition probabilities $P^h(h^\prime | h)$. 

PDS learning is presented in Algorithm~\ref{alg:pds-learning}. At the start of time slot $n$, PDS learning takes the greedy action $a^{n}$ that minimizes the right-hand side of~\eqref{eq:pds-learning-greedy}. After observing the unknown dynamics (comprising the data packet arrivals $l^n \sim P^l(\cdot)$, energy packet arrivals $e^n_H \sim P^{e_H}(\cdot)$, and the next channel state $h^{n + 1} \sim P^h(\cdot | h^n)$), the algorithm evaluates the PDS $(\pds{b}^n, \pds{e}^n, \pds{h}^n)$ as defined in~\eqref{eq:pds}. %It then calls the and evaluates the next buffer state $b^{n+1}$ and the next battery state $e^{n+1}$ as defined in~\eqref{eq:b-recursion} and~\eqref{eq:e-recursion}, respectively.
The core of the PDS learning algorithm is the PDS value function update  defined in Algorithm~\ref{alg:update-PDSV} (\texttt{update\_PDSV}). When \texttt{update\_PDSV} is called in Algorithm~\ref{alg:pds-learning}, it takes as input the current PDS value function estimate $\pds{V}^{n}$, the current PDS $(\pds{b}^n, \pds{e}^n, \pds{h}^n)$, the current realization of the dynamics $(l^n, e_H^n, h^{n + 1})$, and the learning rate parameter $\beta^n \in [0, 1]$. It then uses~\eqref{eq:pds-update} to compute a new PDS value function estimate as a weighted average of (i) the current PDS value function estimate $\pds{V}^n(\pds{b}^n, \pds{e}^n, \pds{h}^n)$ and (ii) a new sample estimate of the PDS value function, i.e., $\eta \max(\pds{b}^n + l^n - N_b, 0) + \gamma V^{n}(b^{n + 1}, e^{n + 1}, h^{n + 1})$, derived based on the observed dynamics and the next state's estimated value $V^{n}(b^{n + 1}, e^{n + 1}, h^{n + 1})$ as computed in~\eqref{eq:pds-evaluate-V}. 
%The sequence of learning rates $\beta^n$ must satisfy the stochastic approximation conditions $\sum_{n=0}^{\infty}\beta^n=\infty$ and $\sum_{n=0}^{\infty}(\beta^n)^{2}<\infty$ to ensure that the algorithm converges with probability 1 to the optimal PDS value function (see Section~\ref{sec:convergence} for more details on convergence of the PDS learning algorithm).

\begin{algorithm}
\caption{Post-Decision State Learning}
\label{alg:pds-learning}
\begin{algorithmic}[1]
\State \textbf{initialize} $\pds{V}^0(\pds{b}, \pds{e}, \pds{h}) = 0$ for all $(\pds{b}, \pds{e}, \pds{h}) \in \mathcal{S}$

\For {time slot $n = 0, 1, 2, \ldots$}

\State Take the greedy action:
\begin{equation}
\label{eq:pds-learning-greedy}
a^n = \argmin_{a \in \mathcal{A}(b^n, e^n)} \left\{b^n + \sum\nolimits_{f = 0}^{a} P^{f}(f | a, h^n)\pds{V}^n(b^n - f, e^n - a \cdot e_{TX}, h^n)\right\}
\end{equation}

\State Observe the data arrivals $l^n$, energy arrivals $e_H^n$, and next channel state $h^{n + 1}$

\State Evaluate the buffer's PDS $\pds{b}^n$, battery's PDS $\pds{e}^n$, and channel's PDS $\pds{h}^n$ using~\eqref{eq:pds}

\State $\pds{V}^{n + 1}(\pds{b}^n, \pds{e}^n, \pds{h}^n) \leftarrow $ \texttt{update\_PDSV}$\bigl(\pds{V}^{n}, [\pds{b}^n, \pds{e}^n, \pds{h}^n], [l^n, e_H^n, h^{n + 1}], \beta^n\bigr)$
\Comment{Algorithm~\ref{alg:update-PDSV}}

% \State Evaluate the next state's value:
% \begin{equation}\label{eq:pds-evaluate-V}
% V^{n}(b^{n + 1}, e^{n + 1}, h^{n + 1}) = \min_{a \in \mathcal{A}(b^{n + 1}, e^{n + 1})} \biggl\{b^{n + 1} + \sum\nolimits_{f = 0}^{a} P^{f}(f | a, h^{n + 1}) \pds{V}^n(b^{n + 1} - f, e^{n + 1} - a \cdot e_{TX}, h^{n + 1}) \biggr\}
% \end{equation}
% \State Update the PDS value function using the information in steps 4 -- 7:
% \begin{equation}\label{eq:pds-update}
% \pds{V}^{n + 1}(\pds{b}^n, \pds{e}^n, \pds{h}^n)\leftarrow(1 - \beta^{n})\pds{V}^{n}(\pds{b}^n, \pds{e}^n, \pds{h}^n)+\beta^{n}[\eta \max(\pds{b}^n + l^n - N_b, 0) + \gamma V^{n}(b^{n + 1}, e^{n + 1}, h^{n + 1})]
%\end{equation}
\EndFor
\end{algorithmic}
\end{algorithm}
\vspace{-0.5cm}
\begin{algorithm}
\caption{Post-Decision State Value Function Update (\texttt{update\_PDSV})}
\label{alg:update-PDSV}
\begin{algorithmic}[1]
\State \textbf{input} $\pds{V}$, $[\pds{b}, \pds{e}, \pds{h}]$, $[l, e_H, h^\prime]$, and $\beta$

\State Evaluate next buffer state $b^\prime = \min(\pds{b} + l, N_b)$ and next battery state $e^\prime = \min(\pds{e} + e_H, N_e)$

\State Evaluate the next state's value:
\begin{equation}\label{eq:pds-evaluate-V}
V(b^\prime, e^\prime, h^\prime) = \min_{a \in \mathcal{A}(b^\prime, e^\prime)} \biggl\{b^\prime + \sum\nolimits_{f = 0}^{a} P^{f}(f | a, h^\prime) \pds{V}(b^\prime - f, e^\prime - a \cdot e_{TX}, h^\prime) \biggr\}
\end{equation}

\State Update the PDS value function using the information in steps 1 -- 3:
\begin{equation}\label{eq:pds-update}
\pds{V}(\pds{b}, \pds{e}, \pds{h})\leftarrow(1 - \beta)\pds{V}(\pds{b}, \pds{e}, \pds{h})+\beta[\eta \max(\pds{b} + l - N_b, 0) + \gamma V(b^\prime, e^\prime, h^\prime)]
\end{equation}

\State \textbf{return} $\pds{V}(\pds{b}, \pds{e}, \pds{h})$
\end{algorithmic}
\end{algorithm}
\vspace{-0.7cm}

\subsection{The Convergence of Post-Decision State Learning}
\label{sec:convergence}
In this section, we prove that the sequence of PDS value functions $\widetilde{V}^n$ generated by the PDS learning algorithm converges to $\widetilde{V}^*$ with probability 1 as $n \rightarrow \infty$. We begin by introducing the concept of a ``well-behaved'' stochastic iterative algorithm, which is known to converge under mild conditions~\cite{bertsekas1995neuro}. In the remainder of this section, we let $\|X\|$ denote the $L_\infty$ norm of the vector $X$, i.e., $\|(X(1), X(2), \ldots, X(k) )\| = \max_i X(i)$.

Consider a stochastic iterative algorithm with the following form:
% \begin{equation}\label{eq:stoch-iter}
% \widetilde{V}^{n + 1}(\widetilde{s}^n) = (1 - \beta^n) \widetilde{V}^{n}(\widetilde{s}^n) + \beta^n [(H^n\widetilde{V}^{n})(\widetilde{s}^n) + w^n(\widetilde{s}^n)],
% \end{equation}
\begin{equation}\label{eq:stoch-iter}
X^{n + 1}(i) = (1 - \beta^n) X^{n}(i) + \beta^n [(H^n X^{n})(i) + w^n(i)],
\end{equation}
where $w^n$ is a bounded random variable with zero expectation and $H^n$
belongs to a family of contraction mappings. The iteration in~\eqref{eq:stoch-iter} constitutes a well-behaved stochastic algorithm if it satisfies the following conditions:

\begin{definition}
\label{def:well-behaved}
(Well-behaved stochastic iterative algorithm~\cite{bertsekas1995neuro}): A stochastic iterative algorithm is well-behaved if:
\begin{enumerate}
\item Stochastic approximation conditions: The non-negative step sizes $\beta^n$ satisfy $\sum_{n = 0}^\infty \beta^n = \infty$ and $\sum_{n = 0}^\infty (\beta^n)^2 \leq \infty$.

\item Bounded noise: There exists a constant $G$ that bounds $w^n(i)$ for any history $F^n$, i.e., $|w^n(i)| \leq G, \forall n, i$.

\item Contraction mapping: There exists a $\gamma \in [0, 1)$ and a vector $X^*$ such that for any $X$ we have $||H^n X - X^*|| \leq \gamma||X - X^*||$.
\end{enumerate}
\end{definition}

\begin{proposition}\label{prop:well-behaved}
The PDS learning algorithm defined in Algorithm~\ref{alg:pds-learning} is a well-behaved stochastic iterative algorithm.
\end{proposition}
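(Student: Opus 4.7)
The plan is to rewrite the PDS update in~\eqref{eq:pds-update} in the canonical stochastic-iteration form~\eqref{eq:stoch-iter} and then verify conditions 1--3 of Definition~\ref{def:well-behaved} one by one. Indexing coordinates by $i = (\pds{b}, \pds{e}, \pds{h})$, I would set $X^n(i) = \pds{V}^n(i)$, define the operator $H$ (independent of $n$) acting on a PDS value function $\pds{V}$ by
\begin{equation*}
(H\pds{V})(\pds{s}) = \eta\sum_{l}P^l(l)\max(\pds{b}+l-N_b,0) + \gamma\sum_{l,e_H,h'}P^l(l)P^{e_H}(e_H)P^h(h'|h)\,V_{\pds{V}}\bigl(\min(\pds{b}+l,N_b),\min(\pds{e}+e_H,N_e),h'\bigr),
\end{equation*}
where $V_{\pds{V}}$ is obtained from $\pds{V}$ via~\eqref{eq:PDSV_to_V}, and let $w^n(i) = \eta\max(\pds{b}^n+l^n-N_b,0) + \gamma V^n(b^{n+1},e^{n+1},h^{n+1}) - (H\pds{V}^n)(i)$ be the realization-minus-expectation noise at the single coordinate visited in slot $n$. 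By construction of the PDS and the fact that $l^n$, $e_H^n$, $h^{n+1}$ are drawn independently from $P^l$, $P^{e_H}$, $P^h(\cdot|h^n)$, we have $\mathbb{E}[w^n(i)\mid F^n]=0$, so the update indeed matches~\eqref{eq:stoch-iter}.

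Condition 1 is imposed on the learning rates $\beta^n$ by assumption and would simply be stated. For Condition 2 (bounded noise), I would first note that the instantaneous overflow term $\eta\max(\pds{b}+l-N_b,0)$ is trivially bounded by $\eta M_l$ since $\pds{b}\le N_b$ and $l\le M_l$. The harder piece is to bound $V^n(\cdot)$ uniformly in $n$; I would argue by induction that $0 \le \pds{V}^n \le \pds{V}_{\max}$ for $\pds{V}_{\max} := \eta M_l/(1-\gamma)$, using that the sample target in~\eqref{eq:pds-update} is a convex combination of the current estimate and a quantity bounded by $\eta M_l + \gamma \pds{V}_{\max}$ (since $V$ derived from a bounded $\pds{V}$ via~\eqref{eq:pds-evaluate-V} is itself bounded, with the holding-cost component $b \leq N_b$ absorbed into the bound). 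A matching lower bound is immediate because all quantities are non-negative. Hence $|w^n(i)|$ is bounded by a constant $G$ depending only on $\eta$, $M_l$, $N_b$, and $\gamma$.

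Condition 3 is the crux: I need to show $H$ is a $\gamma$-contraction in the $L_\infty$ norm with fixed point $\pds{V}^*$. Given two PDS value functions $\pds{V}_1,\pds{V}_2$, the overflow terms cancel, leaving
\begin{equation*}
\bigl|(H\pds{V}_1)(\pds{s}) - (H\pds{V}_2)(\pds{s})\bigr| \le \gamma\sum_{l,e_H,h'}P^l(l)P^{e_H}(e_H)P^h(h'|h)\,\bigl|V_{\pds{V}_1}(\cdot)-V_{\pds{V}_2}(\cdot)\bigr|.
\end{equation*}
A standard min-minus-min bound on~\eqref{eq:pds-evaluate-V}, combined with the fact that $\sum_{f=0}^{a}P^f(f|a,h)=1$, yields $|V_{\pds{V}_1}(s)-V_{\pds{V}_2}(s)| \le \|\pds{V}_1-\pds{V}_2\|_\infty$ for every $s$. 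Taking the supremum over $\pds{s}$ gives $\|H\pds{V}_1-H\pds{V}_2\|_\infty \le \gamma\|\pds{V}_1-\pds{V}_2\|_\infty$, and substituting $\pds{V}_2=\pds{V}^*$ (a fixed point of $H$ by~\eqref{eq:V_to_PDSV}--\eqref{eq:PDSV_to_V}) delivers the contraction property in Definition~\ref{def:well-behaved}.

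The main obstacle I anticipate is the uniform-in-$n$ boundedness of $\pds{V}^n$ needed for Condition 2, since the update mixes a sample based on $V^n$ (which in turn depends on $\pds{V}^n$) back into $\pds{V}^{n+1}$; the induction must be set up carefully so that the argument does not circularly depend on convergence. The contraction proof itself is a clean application of the nonexpansiveness of $\min$ under convex averages, but writing it cleanly requires being explicit about how the holding cost $b$ is handled inside~\eqref{eq:pds-evaluate-V} when differencing $V_{\pds{V}_1}$ and $V_{\pds{V}_2}$.
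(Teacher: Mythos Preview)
Your proposal is correct and follows essentially the same route as the paper: the appendix likewise casts the update as $(H_{PDS}\pds{V}^n)(\pds{s}^n)+w^n(\pds{s}^n)$ with the same operator $H$ and realization-minus-expectation noise, takes Condition~1 by assumption, bounds $|w^n|$ by $V_{\max}=\max_{s,a}c(s,a)/(1-\gamma)$ (your inductive boundedness argument is actually more detailed than what the paper writes, and your $\pds{V}_{\max}$ should indeed carry the $N_b$ holding-cost term you allude to), and proves the $\gamma$-contraction via the same ``difference of minima $\le$ maximum of differences'' step together with $\sum_{\pds{s}'}P_k(\pds{s}'|s',a)=1$ and $\sum_{s'}P_u(s'|\pds{s})=1$. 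The only cosmetic difference is that the paper works directly with $\|H\pds{V}-\pds{V}^*\|$ rather than first proving $\|H\pds{V}_1-H\pds{V}_2\|\le\gamma\|\pds{V}_1-\pds{V}_2\|$ and then specializing.
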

\begin{proof}
The proof is given in the appendix.
%~\ref{prf:well-behaved-algorithm}.
\end{proof}

Note that, although PDS learning converges, it does so relatively slowly because it only updates the value of one PDS in each time slot. In the next subsection, we introduce the concept of \textit{virtual experience}, which allows us to update multiple PDSs in each time slot thereby dramatically improving the convergence rate.
%However, it serves as the foundation of two reinforcement learning algorithms with faster convergence rates that we describe in subsequent sections.

\vspace{-6pt}
\subsection{Virtual Experience Learning}
\label{sec:ve-learning}
Virtual experience learning is a state-of-the-art reinforcement learning algorithm that we proposed in our prior work~\cite{mastronarde2013joint}. The key idea behind virtual experience learning is that it is possible to update the value of multiple PDSs in each time slot. In the DSEHS problem, virtual experience learning is enabled by the fact that the unknown data arrival, energy packet arrival, and channel transition dynamics (i.e., $l^n \sim P^l(l)$, $e_H^n \sim P^{e_H}(e_H)$, and $h^{n+1} \sim P^{h}(h'|h)$, respectively) are independent of the post-decision buffer and battery states (i.e., $\pds{b}^n$ and $\pds{e}^n$, respectively). This enables us to update all PDSs with the same $\pds{h}^n$, but with different $\pds{b}$ and $\pds{e}$ given the observations of $l^n$, $e^{n}_H$, and $h^{n + 1}$. Updating $|\mathcal{S}_b \times \mathcal{S}_e|$ PDSs in every time slot significantly improves the convergence rate at the cost of increased computational complexity. Specifically, if the update is applied every $T$ time slots, then the average number of PDSs updated in each time slot is $|\mathcal{S}_b \times \mathcal{S}_e|/T$. Algorithm~\ref{alg:ve-learning} provides pseudo-code for the virtual experience learning algorithm with an update period $T = 1$.

%Due to its complexity, virtual experience learning is not appropriate for energy harvesting wireless sensors. However, by exploiting the structure of the PDS value function with respect to the PDS buffer and battery states, i.e., $\pds{b}$ and $\pds{e}$, we are able to develop a low-complexity reinforcement learning algorithm that can quickly approximate the optimal PDS value function with both bounded and controllable error. We present the relevant structural properties of the DSEHS problem followed by the proposed approximate reinforcement learning algorithm in Section~\ref{sec:approximate_RL}.

\begin{algorithm}[!htb]
\caption{Virtual Experience Learning (update period $T=1$)}
\label{alg:ve-learning}
\begin{algorithmic}[1]
\State \textbf{initialize} $\pds{V}^0(\pds{b}, \pds{e}, \pds{h}) = 0$ for all $(\pds{b}, \pds{e}, \pds{h}) \in \mathcal{S}$

\For {time slot $n = 0, 1, 2, \ldots$}

\State Take the greedy action:
\begin{equation}
\label{eq:pds-greedy}
a^n = \argmin_{a \in \mathcal{A}(b^n, e^n)} \left\{b^n + \sum\nolimits_{f = 0}^{a} P^{f}(f | a, h^n)\pds{V}^n(b^n - f, e^n - a \cdot e_{TX}, h^n)\right\}
\end{equation}

\State Observe data arrivals $l^n$, energy arrivals $e_H^n$, and next channel state $h^{n + 1}$

\For {all $(\pds{b}, \pds{e}) \in \mathcal{S}_b \times \mathcal{S}_e$}

\State $\pds{V}^{n + 1}(\pds{b}, \pds{e}, h^n) \leftarrow $ \texttt{update\_PDSV}$\bigl(\pds{V}^{n}, [\pds{b}, \pds{e}, h^n], [l^n, e_H^n, h^{n + 1}], \beta^n\bigr)$
\Comment{Algorithm~\ref{alg:update-PDSV}}

\EndFor

\EndFor
\end{algorithmic}
\end{algorithm}
\section{Value Function Approximation-Based Reinforcement Learning}\label{sec:approximate_RL}

The virtual experience learning algorithm is too complex to implement on EHSs because it requires updating $|\mathcal{S}_b \times \mathcal{S}_e|$ PDSs every update period of $T$ time slots. Although $T$ can be increased to further reduce the average learning complexity per time slot, this comes at the expense of a significant decrease in the convergence rate~\cite{mastronarde2013joint}. 

%Specifically, we propose to learn an \textit{approximate} value function instead of the true value function, where our choice of approximation is motivated by the structure of the optimal PDS value function $\pds{V}^{*}(s)$
In this section, we pursue a more effective approach to reduce the complexity of virtual experience learning, while still reaping its benefits. Specifically, we propose to learn an \textit{approximate} value function instead of the true value function. To this end, we first present several structural properties of the optimal PDS value function $\pds{V}^{*}(s)$ (Section~\ref{sec:structural_properties}). Then, motivated by these properties, we propose a novel RL algorithm that learns a near-optimal piece-wise planar approximation of the PDS value function (Section~\ref{sec:2d-grid-update}).

%Then, motivated by these properties, we develop an adaptive low-complexity reinforcement learning algorithm that can learn a near-optimal PDS value function with bounded and controllable error using a piece-wise planar approximation of the PDS value function (Section~\ref{sec:2d-grid-update}). We refer to this structure-aware reinforcement learning algorithm as \textit{grid learning}.

\vspace{-6pt}
\subsection{Structural Properties of the Optimal Value Function}
\label{sec:structural_properties}

Integer convexity is key to understanding the structure of the optimal PDS value function.

\begin{definition}
(Integer Convex): An integer convex function $f(n): \mathcal{N} \rightarrow \mathbb{R}$ on a set of integers $\mathcal{N} \in \{0, 1, \ldots ,N\}$ is a function that has increasing differences in $n$, i.e., 
\begin{equation}
	f(n_1 + m) - f(n_1) \leq f(n_2 + m) - f(n_2)
\end{equation}
for $n_1 < n_2$ and $n_1, n_2, n_1 + m, n_2 + m \in \mathcal{N}$. 
% If the reverse inequality holds, then $f$ is said to be integer concave (i.e., it has decreasing differences in $n$).
\end{definition} 

The following propositions establish the key structural properties of the PDS value function with respect to the post-decision buffer state $\pds{b}$ and the post-decision battery state $\pds{e}$, respectively. The proofs are omitted due to space limitations, but can be found in~\cite{sharma2018structural}.

\begin{proposition}\label{prop:structure-PDSV-b}
The optimal PDS value function $\pds{V}^*(\pds{b}, \pds{e}, \pds{h})$ has the following structural properties with respect to the post-decision buffer state $\pds{b}$:
\begin{enumerate}
\item $\pds{V}^*(\pds{b}, \pds{e}, \pds{h})$ is non-decreasing in the post-decision buffer state \pds{b}, i.e.,
\begin{equation} \label{eq:V-non-decr-b}
\pds{V}^*(\pds{b}, \pds{e}, \pds{h}) \leq \pds{V}^*(\pds{b}+1, \pds{e}, \pds{h}).
\end{equation}
\item If the packet buffer has infinite size ($N_b = \infty$), then $\pds{V}^*(\pds{b}, \pds{e}, \pds{h})$ has increasing differences in the post-decision buffer state $\pds{b}$, i.e.,
\begin{equation}\label{eq:incr-diff-PDSV-b}
\pds{V}^*(\pds{b}, \pds{e}, \pds{h}) - \pds{V}^*(\pds{b} - 1, \pds{e}, \pds{h}) \leq \pds{V}^*(\pds{b} + 1, \pds{e}, \pds{h}) - \pds{V}^*(\pds{b}, \pds{e}, \pds{h}).
\end{equation}
\end{enumerate}
\end{proposition}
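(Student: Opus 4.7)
The plan is to prove both parts by value iteration combined with induction. Define the Bellman operator that, given iterates $V^{(k)}$ and $\pds{V}^{(k)}$, produces $\pds{V}^{(k+1)}$ via~\eqref{eq:V_to_PDSV} and then $V^{(k+1)}$ via~\eqref{eq:PDSV_to_V}, initialized with $V^{(0)} \equiv \pds{V}^{(0)} \equiv 0$. This iteration is a contraction (as argued in the paper's convergence analysis for the PDS operator), so the iterates converge pointwise to $(V^{*}, \pds{V}^{*})$, and both monotonicity and integer convexity along a fixed coordinate are closed under pointwise limits. It therefore suffices to show that the operator preserves each property at every step.

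For Part 1, I would assume inductively that $V^{(k)}(\cdot, e, h)$ is non-decreasing in $b$ for every $e$ and $h$. In~\eqref{eq:V_to_PDSV}, the overflow term $\eta \max(\pds{b}+l-N_b, 0)$ is non-decreasing in $\pds{b}$ pointwise in $l$, and $V^{(k)}(\min(\pds{b}+l, N_b), \cdot, \cdot)$ is the composition of a non-decreasing function with the non-decreasing map $\pds{b} \mapsto \min(\pds{b}+l, N_b)$; the full expectation is therefore non-decreasing, so $\pds{V}^{(k+1)}$ is non-decreasing in $\pds{b}$. Plugging into~\eqref{eq:PDSV_to_V}, for every fixed feasible action $a$ the $Q$-value $b + \sum_f P^f(f|a,h)\pds{V}^{(k+1)}(b-f, e-a\,e_{TX}, h)$ is non-decreasing in $b$, so whenever the feasible set $\mathcal{A}(b, e)$ is fixed, the pointwise minimum is non-decreasing in $b$. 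The only subtlety is the jump from $\mathcal{A}(0, e) = \{0\}$ to $\mathcal{A}(1, e) = \{0, 1\}$; one must verify that $\pds{V}^{(k+1)}(0, e, h)$ does not exceed either candidate $Q$-value at $b=1$, which for the $a=0$ case is immediate from monotonicity in $\pds{b}$ and for the $a=1$ case can be closed using the companion monotonicity of $\pds{V}^{*}$ in $\pds{e}$ stated in the following proposition.

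For Part 2, with $N_b = \infty$ the overflow cost vanishes and $\min(\pds{b}+l, N_b) = \pds{b}+l$ reduces to a pure shift. Assuming inductively that $V^{(k)}(\cdot, e, h)$ has increasing differences in $b$, the shifted map $\pds{b} \mapsto V^{(k)}(\pds{b}+l, \cdot, \cdot)$ inherits increasing differences for each $l$, and non-negative combinations preserve the property; hence $\pds{V}^{(k+1)}$ has increasing differences in $\pds{b}$. For every fixed $a$, the same reasoning gives increasing differences of $Q^{(k+1)}(\cdot, e, h; a)$ in $b$. However, the pointwise minimum of integer-convex functions need not be integer convex, so one cannot conclude the desired property of $V^{(k+1)}$ directly. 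To rescue the argument, I would show that the optimal action $a^{*}(b, e, h)$ is monotone non-decreasing in $b$, equivalent to submodularity of $Q^{(k+1)}$ in $(b, a)$, i.e., that $Q^{(k+1)}(b, e, h; 1) - Q^{(k+1)}(b, e, h; 0)$ is non-increasing in $b$. A standard Topkis-type envelope argument then transfers increasing differences from the $Q$-family to $V^{(k+1)}$.

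The main obstacle will be establishing submodularity of $Q^{(k+1)}$ cleanly, because the difference $Q^{(k+1)}(b, e, h; 1) - Q^{(k+1)}(b, e, h; 0)$ mixes values of $\pds{V}^{(k+1)}$ at distinct buffer \emph{and} battery points $(b, e)$, $(b, e-e_{TX})$, and $(b-1, e-e_{TX})$, so convexity along $\pds{b}$ alone is insufficient. I expect this to force a joint inductive hypothesis that couples the buffer-side properties of this proposition with the analogous battery-side properties stated in the subsequent proposition, so the two should be proved simultaneously rather than in isolation. Finally, $N_b = \infty$ is essential for Part 2: with a finite buffer, $\pds{b} \mapsto \min(\pds{b}+l, N_b)$ becomes concave, and composing it with the convex non-decreasing $V^{(k)}$ destroys increasing differences exactly at the saturation point.
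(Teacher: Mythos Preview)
The paper does not actually prove this proposition: immediately after stating Propositions~\ref{prop:structure-PDSV-b} and~\ref{prop:structure-PDSV-e} it says ``The proofs are omitted due to space limitations, but can be found in~\cite{sharma2018structural}.'' So there is no in-paper proof to compare against.

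On its own merits, your proposal follows the standard and correct template for results of this type: value iteration, induct on the structural property through one Bellman update, and pass to the limit. Your identification of the two genuine difficulties is accurate. First, the action-set enlargement at $b=0 \to b=1$ does require the battery monotonicity of Proposition~\ref{prop:structure-PDSV-e} to close (to bound the $a=1$ $Q$-value from below by $\pds{V}^{(k+1)}(0,e,h)$), so the two propositions should indeed be proved jointly. Second, for Part~2 you correctly note that pointwise minima do not preserve integer convexity and that one needs submodularity of $Q$ in $(b,a)$, i.e., that $Q(b;1)-Q(b;0)$ is non-increasing in $b$; writing this difference out gives a combination of $\pds{V}$-values at $(b,e)$, $(b,e-e_{\text{TX}})$, and $(b-1,e-e_{\text{TX}})$, which is exactly why a joint hypothesis coupling buffer convexity with battery monotonicity/convexity is needed. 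Your explanation of why $N_b=\infty$ is required is also on point. In short, your plan is sound; the paper simply defers the details to its companion reference.
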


\begin{proposition}\label{prop:structure-PDSV-e}
The optimal PDS value function $\pds{V}^*(\pds{b},\pds{e},\pds{h})$ has the following structural properties with respect to the post-decision battery state \pds{e}:
\begin{enumerate} 
\item $\pds{V}^*(\pds{b}, \pds{e}, \pds{h})$ is non-increasing in the post-decision battery state $\pds{e}$, i.e.,
\begin{equation} \label{eq:V-non-incr-e}
\pds{V}^*(\pds{b}, \pds{e}, \pds{h}) \geq \pds{V}^*(\pds{b}, \pds{e}+1, \pds{h}).
\end{equation}
\item $\pds{V}^*(\pds{b},\pds{e},\pds{h})$ has increasing differences in the post-decision battery state $\pds{e}$, i.e.,
\begin{equation}\label{eq:incr-diff-PDSV-e}
\pds{V}^*(\pds{b}, \pds{e}, \pds{h}) - \pds{V}^*(\pds{b}, \pds{e} - 1, \pds{h}) \leq \pds{V}^*(\pds{b}, \pds{e} + 1, \pds{h}) - \pds{V}^*(\pds{b}, \pds{e}, \pds{h}).
\end{equation}
\end{enumerate}
\end{proposition}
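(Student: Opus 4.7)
The plan is to establish both properties simultaneously by structural induction on the value-iteration sequence $\pds{V}^{n+1} = T\pds{V}^{n}$, where $T$ is the composition of \eqref{eq:PDSV_to_V} followed by \eqref{eq:V_to_PDSV}. Because $T$ is a $\gamma$-contraction in the sup-norm, $\pds{V}^{n} \to \pds{V}^{*}$ pointwise, and both ``non-increasing in $\pds{e}$'' and ``increasing differences in $\pds{e}$'' are closed under pointwise convergence. The base case $\pds{V}^{0} \equiv 0$ trivially satisfies both, so the work reduces to proving that $T$ preserves the conjunction of the two properties.

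I would decompose the inductive step into the two natural sub-steps. In the post-decision step \eqref{eq:V_to_PDSV}, the overflow term depends only on $\pds{b}$, so it vanishes from every $\pds{e}$-difference; the surviving term is a non-negative linear combination over $(l, e_H, h')$ of $V$ evaluated at the truncated next battery $\min(\pds{e}+e_H, N_e)$. A short case analysis on whether the truncation is active shows that composing a non-increasing, integer-convex function with this truncation preserves both properties---the only non-trivial case is the saturation transition, which is controlled by the non-positivity of $V$'s slope in $e$. Taking expectations then preserves the properties linearly. In the pre-decision step \eqref{eq:PDSV_to_V}, for each fixed action $a$ the quantity $Q(b,e,h,a) = b + \sum_{f} P^{f}(f|a,h)\,\pds{V}(b-f, e - a\cdot e_{TX}, h)$ inherits monotonicity and integer convexity in $e$ from $\pds{V}$ by the deterministic shift $e \mapsto e - a\cdot e_{TX}$ and convex combination over $f$. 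Non-increasingness of $V(b,e,h) = \min_{a \in \mathcal{A}(b,e)} Q(b,e,h,a)$ in $e$ then follows because the action set $\mathcal{A}(b,e)$ only grows with $e$ and the minimum of non-increasing functions is non-increasing.

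The main obstacle is that the pointwise minimum of two integer-convex functions need not be integer convex, so showing that $V$ inherits increasing differences in $e$ requires more structure. My plan is to first prove a monotone-threshold characterization of the greedy action: the advantage $Q(b,e,h,1) - Q(b,e,h,0)$ can be shown to be non-increasing in $e$ by invoking the increasing-differences hypothesis on $\pds{V}$ in $\pds{e}$, so that once $a = 1$ is optimal at some $e$ it remains optimal at every higher battery level. For fixed $(b,h)$ this partitions the $e$-axis into an initial no-transmit region (which includes the infeasibility zone $e < e_{TX}$) and a tail transmit region, within each of which $V$ coincides with a single $Q$-branch and hence is integer convex. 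The three-point inequality \eqref{eq:incr-diff-PDSV-e} then only needs to be verified across the feasibility boundary $e = e_{TX}$ and across the action-switch threshold; each reduces to bounding a one-step jump of the optimal-$Q$ envelope by adjacent differences of $\pds{V}$, and I expect both to close using the inductive hypothesis. This boundary and threshold bookkeeping---and in particular confirming that the monotone-threshold claim follows from integer convexity of $\pds{V}$ in $\pds{e}$ alone, rather than from a stronger joint $(b,e)$ cross-property---is the most delicate piece of the argument; if a joint property turns out to be needed, the clean fallback is to strengthen the induction to carry it along.
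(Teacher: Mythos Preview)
The paper does not actually prove this proposition: the text states that the proofs of Propositions~\ref{prop:structure-PDSV-b} and~\ref{prop:structure-PDSV-e} ``are omitted due to space limitations, but can be found in~\cite{sharma2018structural}.'' There is therefore no in-paper argument against which to compare your proposal directly. That said, your value-iteration induction with closure under pointwise limits is the standard route for such structural results, and your decomposition into the two Bellman sub-steps~\eqref{eq:V_to_PDSV} and~\eqref{eq:PDSV_to_V} is the natural one. Your treatment of the saturation $\min(\pds{e}+e_H,N_e)$ is correct: the only non-trivial case is the transition slot, and it is indeed controlled by the non-positive slope of $V$ in $e$.

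Where your plan is still genuinely open is the pre-decision minimum, and you are right to flag it. Writing the advantage as
\[
Q(b,e,h,1)-Q(b,e,h,0)=(1-q(h))\bigl[\pds{V}(b-1,e-e_{TX},h)-\pds{V}(b,e-e_{TX},h)\bigr]+\bigl[\pds{V}(b,e-e_{TX},h)-\pds{V}(b,e,h)\bigr],
\]
its first difference in $e$ splits into a term governed by integer convexity of $\pds{V}$ in $\pds{e}$ and a term governed by how the $b$-increment $\pds{V}(b,\cdot,h)-\pds{V}(b-1,\cdot,h)$ varies with $\pds{e}$. The latter is a genuine cross-property in $(b,e)$; integer convexity in $\pds{e}$ alone does not control it, and the two pieces can have opposite signs. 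So your ``fallback'' of carrying a joint $(b,e)$ monotonicity (of the type $\pds{V}(b,e,h)-\pds{V}(b-1,e,h)$ monotone in $e$, or equivalently a sub/supermodularity statement) through the induction is not a contingency but almost certainly required, and you should build it into the hypothesis from the outset. Be prepared for that joint property to interact with both saturations $\min(\pds{b}+l,N_b)$ and $\min(\pds{e}+e_H,N_e)$ in the post-decision step; verifying its preservation there---and checking that the direction of the cross-monotonicity actually delivers the threshold inequality you need---is where the real bookkeeping lies.
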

 
Proposition~\ref{prop:structure-PDSV-b} implies that the cost to serve an additional data packet increases with the queue backlog. In~\cite{sharma2018structural}, we were only able to prove that $\pds{V}^*(\pds{b}, \pds{e}, \pds{h})$  has increasing differences in the buffer state for an infinite size buffer; however, we have not observed any cases in practice where this property does not hold for finite buffers. 
Proposition~\ref{prop:structure-PDSV-e} implies that the benefit of an additional energy packet decreases with the available battery energy. 

%In Section~\ref{sec:2d-grid-update}, we use these properties to develop a reinforcement learning algorithm that learns a piece-wise planar approximation of the PDS value function, which allows us to both bound and control the approximation error.

\vspace{-6pt}
\subsection{Grid Learning}
\label{sec:2d-grid-update}
%The virtual experience learning algorithm is too complex to implement on energy harvesting wireless sensors because it requires updating $|\mathcal{S}_b \times \mathcal{S}_e|$ PDSs every $T$ time slots. However, 
Since the optimal PDS value function has increasing differences in the post-decision buffer and battery states (see Propositions~\ref{prop:structure-PDSV-b} and~\ref{prop:structure-PDSV-e}), we propose to approximate it as a piece-wise planar function. Using this approximation, we develop an adaptive low-complexity reinforcement learning algorithm that can quickly learn an approximation of the optimal PDS value function with bounded and controllable error. We refer to this structure-aware algorithm as \textit{grid learning}.
%For a simpler problem without energy harvesting, a one-dimensional piece-wise linear PDS value function approximation was proposed in \cite{fu2012structure}. 

For each post-decision channel state $\pds{h} \in \mathcal{S}_h$, the grid learning algorithm constructs a two-dimensional grid of post-decision buffer and battery states on which to learn the PDS value function. Rather than using a uniform grid, however, we propose to use a \textit{quadtree} data structure so that our value function approximation can be adaptively refined in space and time (i.e., on the buffer-battery plane and from slot-to-slot) to meet a predetermined approximation error tolerance, $\delta$. Each leaf of the quadtree is then divided into two triangles, which lie on two intersecting planes. Together, the planes of all leaf nodes compose the proposed piece-wise planar approximation.

The remainder of this subsection is organized as follows. In Section~\ref{sec:quadtree}, we formalize the quadtree data structure and present relevant quadtree operations. In Section~\ref{sec:grid-alg}, we present pseudocode for the grid learning algorithm. In Section~\ref{sec:appoximation}, we discuss how the value function can be approximated from the quadtree. Finally, we describe how we adaptively refine the quadtree to meet the target error tolerance in Section~\ref{sec:grid-update}.

\subsubsection{Quadtree definition}\label{sec:quadtree}
Let $\mathcal{T}$ denote a quadtree defined on the set of buffer-battery state pairs $\mathcal{S}_b \times \mathcal{S}_e$ within a bounding box (\texttt{BB}) defined as follows (cf. Fig.~\ref{fig:quadtree-bb}):
\begin{equation} \label{eq:bounding-box}
\texttt{BB}(\mathcal{T}) = \{(b_{-},e_{-}), (b_{+},e_{-}), (b_{-},e_{+}), (b_{+}, e_{+})\}, 
\end{equation}
where $0 \leq b_{-} < b_{+} \leq N_b$ and $0 \leq e_{-} < e_{+} \leq N_e$. In words, $\texttt{BB}(\mathcal{T})$ comprises the extreme vertices of the quadtree. 
We say that $(b,e)$ lies inside $\mathcal{T}$'s bounding box if $b_{-} \leq b \leq b_{+}$ and $e_{-} \leq e \leq e_{+}$; otherwise, $(b,e)$ lies outside of $\mathcal{T}$'s bounding box. 
%We also define functions to directly access the bounding boxes' northwest and southeast vertices, i.e., $\texttt{BB\_NW}(\mathcal{T}) = (b_{-},e_{+})$ and $\texttt{BB\_SE}(\mathcal{T}) = (b_{+},e_{-})$, which will be useful later.

If $\mathcal{T}$ is a leaf node, then it can be subdivided into four sub-quadtrees (children) spanning its northwest (NW), northeast (NE), southwest (SW), and southeast (SE) quadrants, i.e., $\texttt{subdivide}(\mathcal{T}) = \{\mathcal{T}_{NW},\mathcal{T}_{NE},\mathcal{T}_{SW},\mathcal{T}_{SE}\}$, with bounding boxes defined as follows (cf. Fig.~\ref{fig:quadtree-subdivide}):
\begin{align*}
\texttt{BB}(\mathcal{T}_{NW}) =& \{(b_{-},\bar{e}), (\bar{b},\bar{e}), (b_{-},e_{+}), (\bar{b}, e_{+})\}, \quad
\texttt{BB}(\mathcal{T}_{NE}) =& \{(\bar{b},\bar{e}), (b_{+},\bar{e}), (\bar{b},e_{+}), (b_{+}, e_{+})\}, \\
\texttt{BB}(\mathcal{T}_{SW}) =& \{(b_{-},e_{-}), (\bar{b},e_{-}), (b_{-},\bar{e}), (\bar{b},\bar{e})\}, \quad
\texttt{BB}(\mathcal{T}_{SE}) =& \{(\bar{b},e_{-}), (b_{+},e_{-}), (\bar{b},\bar{e}), (b_{+}, \bar{e})\},
\end{align*}
where $\bar{b} = \floor{\frac{b_{+}+b_{-}}{2}} \in \mathcal{S}_b$, $\bar{e} = \floor{\frac{e_{+}+e_{-}}{2}} \in \mathcal{S}_e$, and $\floor{x}$ is the floor operator, which denotes the largest integer that is smaller than $x$. 
With a slight abuse of notation, we write $(b,e) \in \mathcal{T}$ if $(b,e)$ is an element of $\mathcal{T}$'s bounding box or one of its children's bounding boxes, recursively down to all of its leaf nodes.

% trim={<left> <lower> <right> <upper>}
\begin{figure}[!htb]
	\centering
    \begin{subfigure}{.45\textwidth}
    	\centering
    	\includegraphics[width=1\textwidth,trim={1cm 5cm 9cm 1cm},clip]{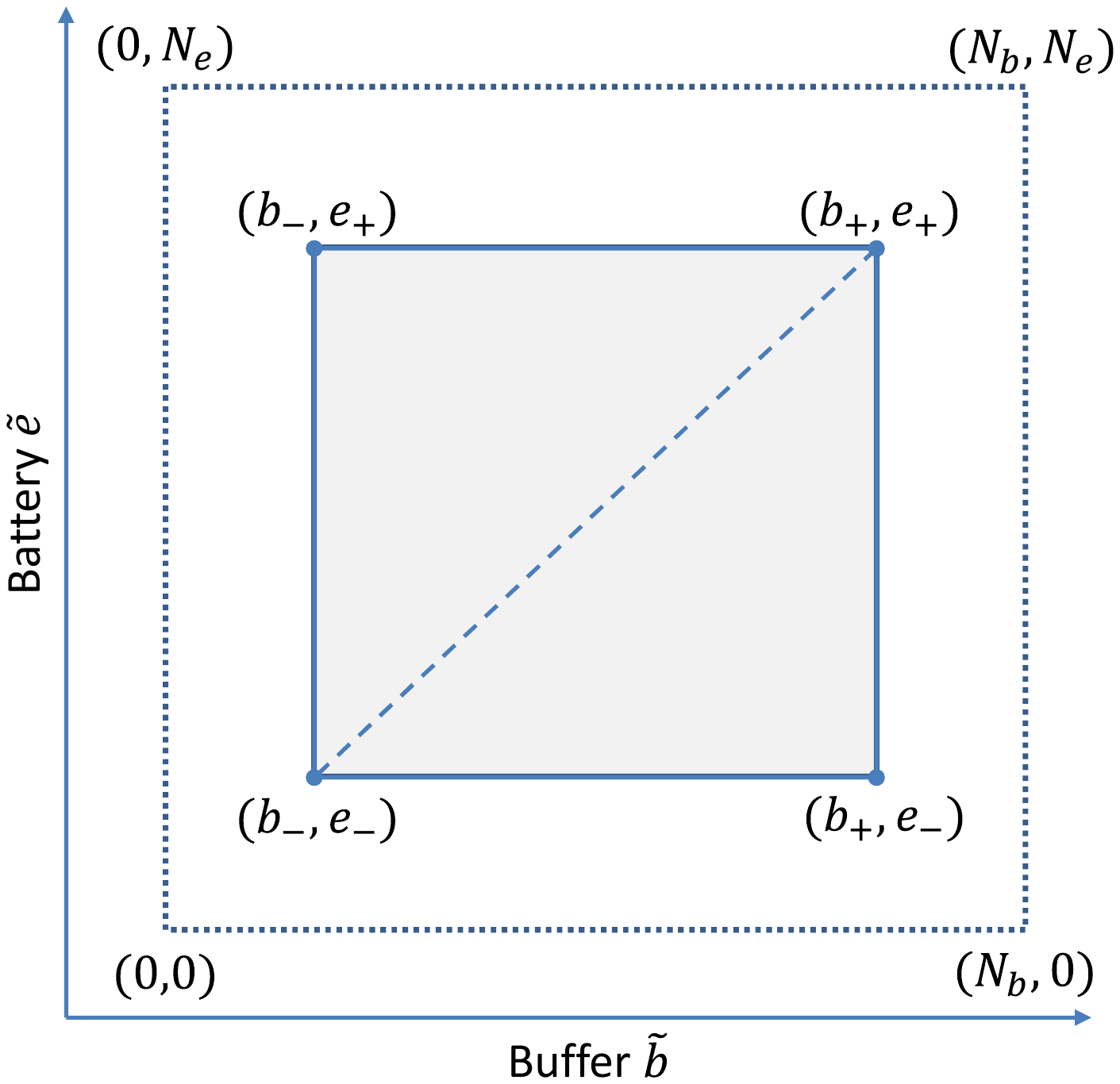}
    	\subcaption{Quadtree bounding box.}
        \label{fig:quadtree-bb}
  	\end{subfigure}
	\begin{subfigure}{.45\textwidth}
    	\centering
    	\includegraphics[width=1\textwidth,trim={1cm 5cm 9cm 1cm},clip]{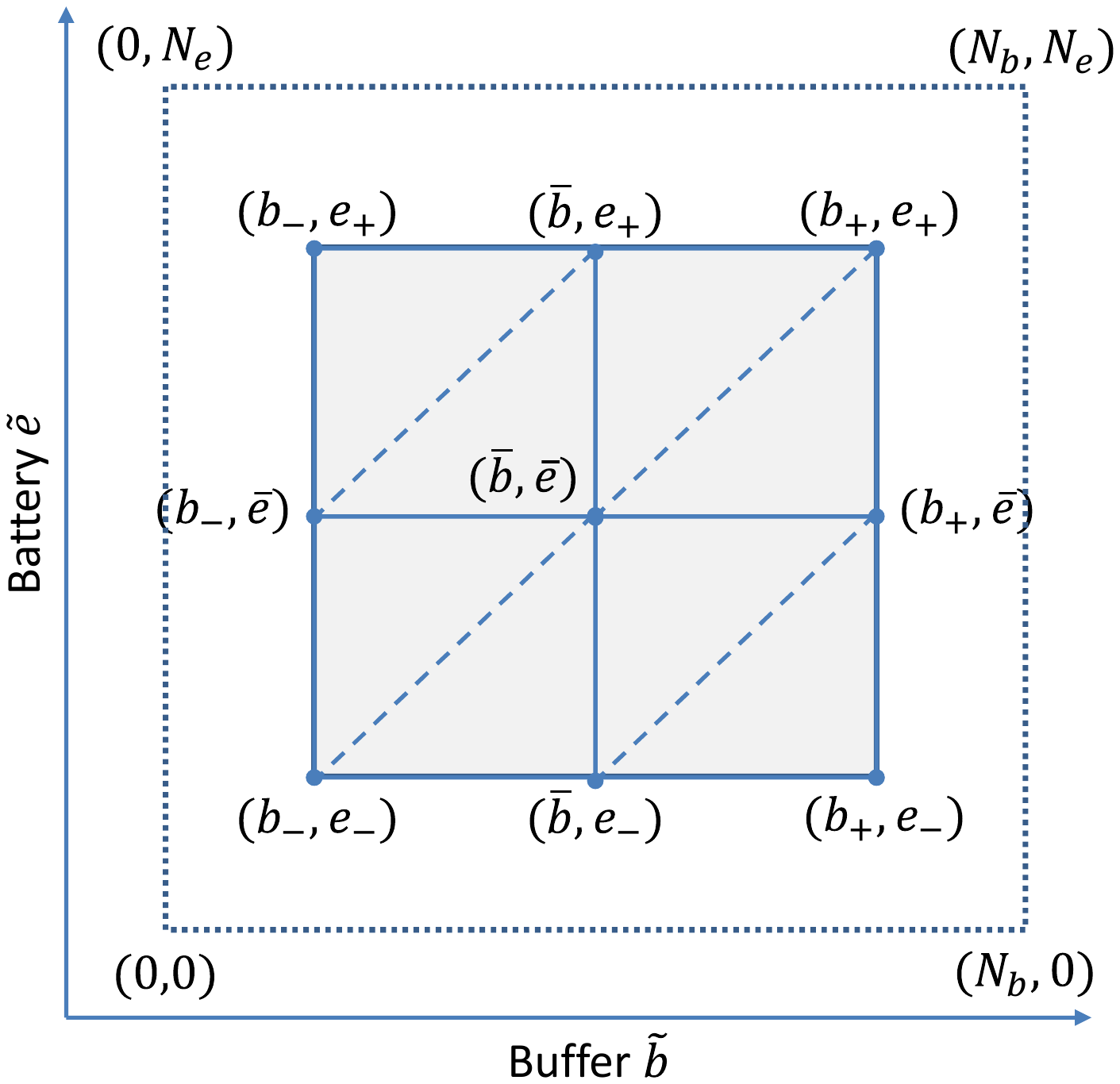}
    	\subcaption{Quadtree subdivide operation.}
        \label{fig:quadtree-subdivide}
  	\end{subfigure}
\caption{Quadtree construction. Each leaf node is divided into a northwest and southeast triangle as part of the piece-wise planar approximation.}
\label{fig:quadtree-def}
\end{figure}

\subsubsection{Grid learning algorithm} \label{sec:grid-alg}
Let $\mathcal{T}^{n}(\pds{h})$ denote the quadtree used to approximate the PDS value function in channel state $\pds{h}$ in time slot $n$. We assume that $\texttt{BB}(\mathcal{T}^{n}(\pds{h}))$ is defined as in \eqref{eq:bounding-box} for all $n$. 
Note that we do not require $\mathcal{T}^n(\pds{h})$ to span the entire buffer-battery plane (i.e., for $b_{-}=0$, $b_{+} = N_b$, $e_{-}=0$, and $e_{+} = N_e$) because $N_b$ and $N_e$ may be very large (or infinite) and it is often unnecessary to accurately approximate the value at the extremes of the state space (e.g., if there is an abundant supply of energy or very little data to serve). 

The grid learning algorithm approximates the value of any PDS pair $(\pds{b}, \pds{e}) \notin \mathcal{T}^{n}(\pds{h})$ using the values of PDS pairs $(\pds{b}, \pds{e}) \in \mathcal{T}^{n}(\pds{h})$. That is, instead of operating directly on the PDS value function $\pds{V}$, it operates on an approximate PDS value function $\hat{V}$ such that
\begin{equation} \label{eq:approx-PDSV}
\hat{V}^n(\pds{b}, \pds{e}, \pds{h}) =
\begin{cases}
\pds{V}^n(\pds{b}, \pds{e}, \pds{h}), & \mbox{if } (\pds{b}, \pds{e}) \in \mathcal{T}^{n}(\pds{h}) \\
\text{\texttt{approximate\_PDSV}}\bigl(\pds{V}^n, \mathcal{T}^{n}(\pds{h}), [\pds{b}, \pds{e}, \pds{h}]\bigr), & \mbox {otherwise.}
\end{cases}
\end{equation}
In Section~\ref{sec:appoximation}, we describe how the function \texttt{approximate\_PDSV} calculates the approximate value of buffer-battery state pairs that lie inside or outside of $\mathcal{T}^n(\pds{h})$'s bounding box.

Pseudocode for the grid learning algorithm with update period $T = 1$ is provided in Algorithm~\ref{alg:grid-learning}. At the start of the algorithm ($n = 0$), we initialize $\mathcal{T}^{0}(\pds{h})$ with $\texttt{BB}(\mathcal{T}^{0}(\pds{h}))$ defined as in \eqref{eq:bounding-box} and initialize its child nodes to empty. In other words, $\mathcal{T}^{0}(\pds{h})$ serves as the root of the quadtree and provides the minimum set of grid points from which we can estimate the values of all $(\pds{b}, \pds{e}) \in \mathcal{S}_b \times \mathcal{S}_e$ using the proposed piece-wise planar approximation. 
After initialization, the algorithm proceeds similarly to virtual experience learning (Algorithm~\ref{alg:ve-learning}) with three key differences. 
First, as noted above, the algorithm operates on an approximate PDS value function $\hat{V}$ instead of the actual PDS value function $\pds{V}$.\footnote{In Algorithm~\ref{alg:grid-learning}, we slightly abuse the notation when we use $\hat{V}^{n}$ on the right-hand side of~\eqref{eq:grid-greedy} and as an argument to the \texttt{update\_PDSV} function. In practice, we have chosen to calculate values of $\hat{V}^{n}$ on-demand using the \texttt{approximate\_PDSV} function. In this way, we do not need to maintain a full tabular representation of the (approximate) value function.} 
Second, the function \texttt{update\_PDSV} is only called for PDS pairs $(\pds{b}, \pds{e}) \in \mathcal{T}^{n}(\pds{h})$, rather than all PDS pairs $(\pds{b}, \pds{e}) \in \mathcal{S}_b \times \mathcal{S}_e$. Since $\mathcal{T}^{n}(\pds{h})$ is only a small subset of $\mathcal{S}_b \times \mathcal{S}_e$ and $\pds{V}$ is only defined on $\mathcal{T}^{n}(\pds{h}), \forall \pds{h} \in \mathcal{S}_h$, the grid learning algorithm requires significantly less computation and memory than exhaustive virtual experience learning operating on the full PDS value function (i.e., Algorithm~\ref{alg:ve-learning}). 
Third, since the approximate value function $\hat{V}$ may not approximate all PDS pairs $(\pds{b}, \pds{e}) \in \mathcal{S}_b \times \mathcal{S}_e$ within the target error tolerance $\delta$, we use the \texttt{update\_grid} function (Algorithm~\ref{alg:DynamicGrid}) to adaptively refine the approximation over time. 
We now describe the \texttt{approximate\_PDSV} and \texttt{update\_grid} functions in detail.

\subsubsection{PDS value function approximation} \label{sec:appoximation}
Suppose that $\mathcal{T}(h)$ is the root of the quadtree and that we want to find the  approximate value $\hat{V}(b,e,h)$ of the buffer-battery state pair $(b,e)$, which may or may not lie inside of $\mathcal{T}(h)$'s bounding box as defined in \eqref{eq:bounding-box}.
%Given a post-decision buffer-battery state pair $(\pds{b},\pds{e}) \notin \mathcal{T}^{n}(\pds{h})$, the function \texttt{approximate\_PDSV} needs to compute the approximate value $\hat{V}^n(\pds{b}^n,\pds{e}^n,\pds{h})$. 
The function \texttt{approximate\_PDSV} achieves this in roughly four steps: 1) associate $(b,e)$ with one of the quadtree's leaf nodes; 2) further associate $(b,e)$ with the leaf node's NW or SE triangle; 3) find the equation of the  plane defined by the selected triangle's vertices (hereafter, we will refer to this as the \textit{approximating plane}); and 4) calculate the approximate value of $\hat{V}(b,e,h)$ from the approximating plane.

%\textbf{Finding the approximating plane:} 
%Suppose that $\mathcal{T}(h)$ is the root of the quadtree and that we want to find the best approximating plane for the buffer-battery state pair $(b,e)$, which may or may not lie inside of $\mathcal{T}(h)$'s bounding box as defined in \eqref{eq:bounding-box}. 
To be precise, we first associate $(b,e)$ with the quadtree's nearest leaf node using a recursive search from the root. Subsequently, we associate $(b,e)$ with the leaf node's nearest triangle as illustrated in Fig.~\ref{fig:find-plane}. Specifically, let  $d_1$ and $d_2$ denote the distances between $(b,e)$ and the leaf node's NW and SE vertices, respectively. If $d_1 < d_2$, then we associate $(b,e)$ with the NW triangle; otherwise, we associate it with the SE triangle.

% trim={<left> <lower> <right> <upper>}
\begin{figure}[!htb]
	\centering
    \begin{subfigure}{.45\textwidth}
    	\centering
    	\includegraphics[width=1\textwidth,trim={2cm 6cm 9cm 2cm},clip]{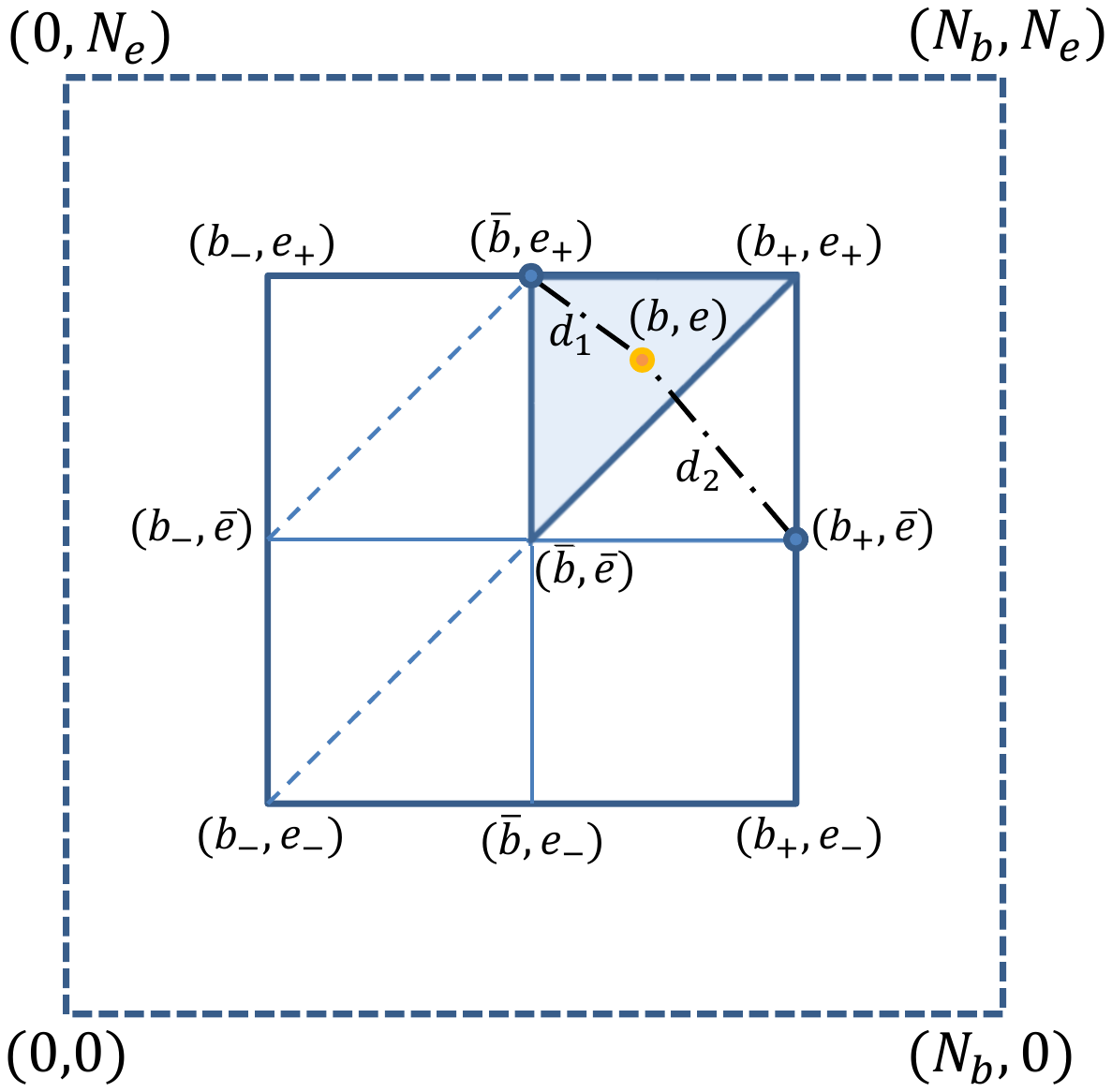}
    	\subcaption{$(b,e)$ lies within the quadtree's bounding box.}
        \label{fig:in-bound-plane}
  	\end{subfigure}
	\begin{subfigure}{.45\textwidth}
    	\centering
    	\includegraphics[width=1\textwidth,trim={2cm 6cm 9cm 2cm},clip]{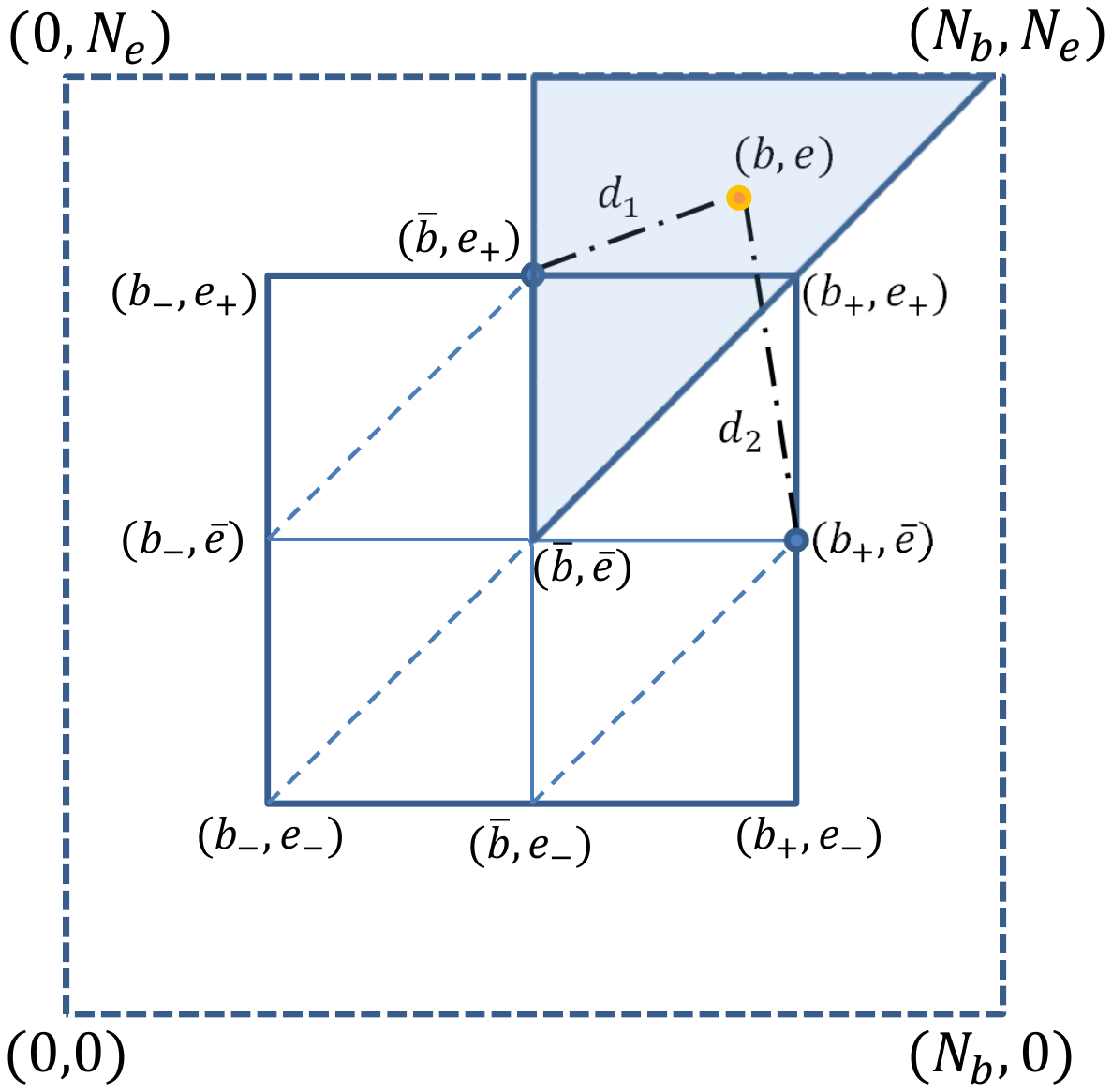}
    	\subcaption{$(b,e)$ lies outside of the quadtree's bounding box.}
        \label{fig:out-bound-plane}
  	\end{subfigure}
\caption{Associating the buffer-battery state pair $(b,e)$ with the leaf node's closest triangle.  If $d_1 < d_2$, then we use the NW triangle; otherwise, we use the SE triangle.}
\label{fig:find-plane}
\end{figure}

Denote the vertices of the selected triangle by $\mathbf{x}_i = (b_i, e_i, \pds{V}(b_i,e_i,h))$, for $i = 1,2,3$, as illustrated in Fig~\ref{fig:approx-PDSV}. These three points define a plane with normal vector $\mathbf{n} = (n_1,n_2,n_3) =(\mathbf{x}_1 - \mathbf{x}_2) \times (\mathbf{x}_1 - \mathbf{x}_3)$, where $\times$ denotes the cross product. The equation of the approximating plane can therefore be written as:
\begin{equation*}
n_1 (\pds{b} - b_1) + n_2 (\pds{e} - e_1) + n_3 ( V - \pds{V}(b_1,e_1,h)) = 0.
\end{equation*}
Finally, substituting $(b,e)$ for $(\pds{b},\pds{e})$ and solving for $V$ we get:
\begin{equation} \label{eq:calc-approx-PDSV}
V=\hat{V}(b,e,h) = \pds{V}(b_1,e_1,h) - \frac{n_1 (b - b_1) + n_2 (e - e_1)}{n_3}.
\end{equation}
Pseudocode for the function \texttt{approximate\_PDSV} is given in Algorithm~\ref{alg:approximate-PDSV}.

The following proposition shows that the maximum error resulting from a piece-wise planar approximation of the optimal PDS value function is bounded.

\begin{proposition}\label{prop:bound}
Let $\pds{V}^*$ denote the optimal PDS value function that satisfies the Bellman equation~\eqref{eq:V_to_PDSV}. Let $\hat{V}$ denote the approximate PDS value function~\eqref{eq:approx-PDSV}. Let $(b,e)$ be associated with the triangle with vertices $\mathbf{x}_i = (b_i, e_i, \pds{V}^*(b_i,e_i,h))$, for $i = 1,2,3$. It follows that
\begin{equation} \label{eq:error}
\hat{V}(b,e,h) - \pds{V}^{*}(b,e,h) \leq \max_{i \in \{1,2,3\}} \pds{V}^{*}(b_i,e_i,h) - \min_{i \in \{1,2,3\}} \pds{V}^{*}(b_i,e_i,h).
\end{equation}
\end{proposition}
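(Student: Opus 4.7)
My plan is to bound $\hat V(b,e,h)$ and $\pds V^*(b,e,h)$ separately and combine. First, assuming $(b,e)$ lies inside the associated triangle, I would express it in barycentric coordinates with respect to the three vertices: $(b,e) = \sum_{i=1}^3 \lambda_i (b_i, e_i)$ with $\lambda_i \geq 0$ and $\sum_i \lambda_i = 1$. Because the approximating plane through the points $\mathbf{x}_i = (b_i, e_i, v_i)$ is affine and agrees with $\pds V^*$ at the vertices, the construction in \eqref{eq:calc-approx-PDSV} gives $\hat V(b,e,h) = \sum_i \lambda_i v_i$. This convex combination immediately yields the upper bound $\hat V(b,e,h) \leq \max_i v_i$.

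Next, I would show the lower bound $\pds V^*(b,e,h) \geq \min_i v_i$ using Propositions~\ref{prop:structure-PDSV-b} and~\ref{prop:structure-PDSV-e}. For the NW triangle, whose vertices are $(b_-, e_+)$, $(b_+, e_+)$, and $(b_-, e_-)$, the monotonicity of $\pds V^*$ (non-decreasing in $b$, non-increasing in $e$) identifies the NW corner $(b_-, e_+)$ as the minimum-value vertex. Since any $(b,e)$ in this triangle satisfies $b \geq b_-$ and $e \leq e_+$, parts 1 of Propositions~\ref{prop:structure-PDSV-b} and~\ref{prop:structure-PDSV-e} together give $\pds V^*(b,e,h) \geq \pds V^*(b_-, e_+, h) = \min_i v_i$. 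For the SE triangle, with vertices $(b_+, e_-)$, $(b_+, e_+)$, and $(b_-, e_-)$, the maximum-value vertex is the SE corner, so $\min_i v_i$ is the smaller of $v_{NE}$ and $v_{SW}$. Since neither of these dominates $(b,e)$ in the coordinate-wise monotonicity order, I would trace a path from $(b,e)$ toward the minimum-value vertex along which the below-diagonal constraint combined with the integer convexity of $\pds V^*$ in each coordinate (parts 2 of Propositions~\ref{prop:structure-PDSV-b} and~\ref{prop:structure-PDSV-e}) yields the same lower bound.

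Combining the two bounds gives $\hat V(b,e,h) - \pds V^*(b,e,h) \leq \max_i v_i - \min_i v_i$, as claimed. The case where $(b,e)$ lies outside the quadtree's bounding box follows by an analogous extrapolation argument on the nearest leaf's triangle, with the same monotonicity and convexity invocations used to sandwich $\pds V^*(b,e,h)$ between the triangle's vertex values.

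The main obstacle is the lower bound on $\pds V^*(b,e,h)$ for the SE triangle. Plain monotonicity is not enough here because, unlike the NW triangle, the minimum-value vertex is not coordinate-wise dominated by $(b,e)$: moving from $(b,e)$ to either NE or SW requires changing $b$ and $e$ in the same direction, and the two effects on $\pds V^*$ partially cancel. The integer convexity of $\pds V^*$ along each coordinate, together with the below-diagonal geometric constraint that defines the SE triangle, is what prevents $\pds V^*$ from dipping far enough below $\min_i v_i$ to violate the claimed error bound, and correctly threading this argument is the technical core of the proof.
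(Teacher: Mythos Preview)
Your decomposition—bound $\hat V$ above by $\max_i v_i$ and $\pds V^*$ below by $\min_i v_i$, then subtract—is exactly the route the paper takes. The paper's argument is considerably terser than yours: it cites Propositions~\ref{prop:structure-PDSV-b} and~\ref{prop:structure-PDSV-e} and asserts that (i) the increasing-differences property makes the approximating plane an upper bound on $\pds V^*$, and (ii) monotonicity forces both $\pds V^*$ and $\hat V$ to lie in $[\min_i v_i,\max_i v_i]$ on the triangle, from which~\eqref{eq:error} follows. One genuine difference is how increasing differences are used: the paper invokes them for step~(i), i.e., to obtain $\hat V\geq\pds V^*$ directly (so the plane sits above the ``convex'' value surface), whereas you never make that claim and instead reserve convexity for patching the SE-triangle lower bound. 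Your barycentric argument for $\hat V\leq\max_i v_i$ is a clean replacement for the paper's appeal to monotonicity of $\hat V$.

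Your SE-triangle concern is well-placed and, in fact, you are being more careful here than the paper. The NW triangle contains the coordinate-wise minimum corner $(b_-,e_+)$, so $\pds V^*(b,e)\geq\min_i v_i$ follows from monotonicity alone; the SE triangle does not contain that corner, and the paper's one-line appeal to monotonicity in step~(ii) does not single this case out. That said, your proposed fix—tracing a path and invoking coordinate-wise integer convexity together with the below-diagonal constraint—remains only a sketch. Separate increasing differences in $\pds b$ and in $\pds e$ are strictly weaker than joint convexity on the $(\pds b,\pds e)$ plane, so neither the paper's step~(i) nor your path argument is fully justified by Propositions~\ref{prop:structure-PDSV-b} and~\ref{prop:structure-PDSV-e} as stated; making either rigorous would require spelling out exactly how the triangle geometry interacts with the one-dimensional convexity inequalities.
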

\begin{proof}
The result follows from Propositions~\ref{prop:structure-PDSV-b} and~\ref{prop:structure-PDSV-e}. In particular, since $\pds{V}^{*}$ has increasing differences in $\pds{b}$ and $\pds{e}$, the plane defined by the approximating triangle provides an upper bound on the true value function. Additionally, since $\pds{V}^{*}$ and $\hat{V}$ are non-decreasing in $\pds{b}$ and non-increasing in $\pds{e}$, they are both bounded by $\min_{i \in \{1,2,3\}} \pds{V}^{*}(b_i,e_i)$ and $\max_{i \in \{1,2,3\}} \pds{V}^{*}(b_i,e_i)$ for all $(b,e)$ that lie in the approximating triangle. The result in~\eqref{eq:error} immediately follows.
\end{proof}

% trim={<left> <lower> <right> <upper>}
\begin{figure}[!htb]
\centering
  \includegraphics[width=3.45in,trim={6.1cm 6.8cm 5.8cm 2cm},clip]{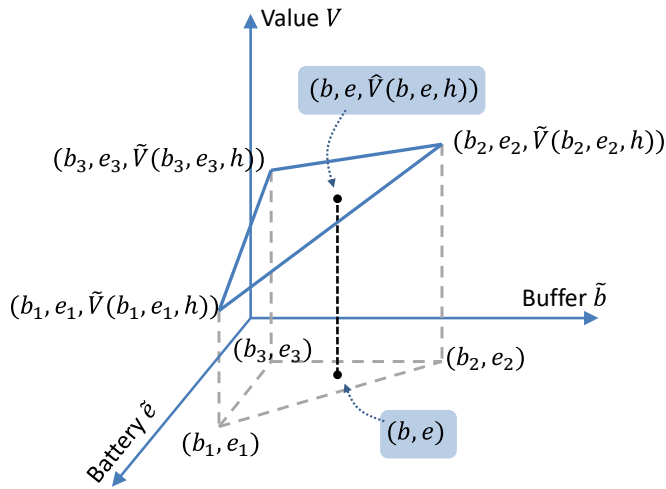}
  \caption{Calculating the approximate value $\hat{V}(b,e,h)$ of a buffer-battery state pair $(b,e) \notin \mathcal{T}(h)$ using a piece-wise planar approximation. $\hat{V}(b,e,h)$ is calculated as in~\eqref{eq:calc-approx-PDSV}.}
  \label{fig:approx-PDSV}
  \vspace{-0.5cm}
\end{figure}

% These triangles are defined by the following vertices:
% \begin{align*}
% \Delta_\texttt{NW}(\mathcal{T}) =& \{(b_{-},e_{-},\pds{V}(b_{-},e_{-},h)),(b_{-},e_{+},\pds{V}(b_{-},e_{+},h)), (b_{+},e_{+},\pds{V}(b_{+},e_{+},h))\} = \{\mathbf{x}_\texttt{NW}^{1},\mathbf{x}_\texttt{NW}^{2},\mathbf{x}_\texttt{NW}^{3}\}, \\
% \Delta_\texttt{SW}(\mathcal{T}) =& \{(b_{-},e_{-},\pds{V}(b_{-},e_{-},h)),(b_{+},e_{-},\pds{V}(b_{+},e_{-},h)), (b_{+},e_{+},\pds{V}(b_{+},e_{+},h))\} = \{\mathbf{x}_\texttt{SE}^{1},\mathbf{x}_\texttt{SE}^{2},\mathbf{x}_\texttt{SE}^{3}\}.
% \end{align*}

\subsubsection{Dynamic grid update} \label{sec:grid-update}

The function \texttt{update\_grid} adaptively refines the piecewise-planar approximation until a predetermined maximum error threshold, $\delta$, is met. The algorithm finds the  error $\delta_\ell$ among all leaf nodes $\mathcal{T}_\ell \in \texttt{leaves}(\mathcal{T})$, where $\delta_\ell$ is calculated as the error defined on the right-hand side of \eqref{eq:error}. Subsequently, if $\max_{\ell} \delta_\ell > \delta$, then $\mathcal{T}_\ell$ is subdivided as described in Section~\ref{sec:quadtree}. Pseudocode for the function \texttt{update\_grid} is given in Algorithm~\ref{alg:DynamicGrid}.

\begin{algorithm} [!htb]
\caption{Grid Learning (update period $T=1$)}
\label{alg:grid-learning}
\begin{algorithmic}[1]
\State \textbf{initialize} $\mathcal{T}^{0}(\pds{h})$ for all $\pds{h} \in \mathcal{S}_h$ as in \eqref{eq:bounding-box}, $\pds{V}^0(\pds{b}, \pds{e}, \pds{h}) = 0$ for all $(\pds{b}, \pds{e}) \in \mathcal{T}^{0}(\pds{h})$ and $\pds{h} \in \mathcal{S}_h$, and $\delta$ to the desired error threshold
\For {time slot $n = 0, 1, 2, \ldots$}
\State Take the greedy action:
\begin{equation}
\label{eq:grid-greedy}
a^n = \argmin_{a \in \mathcal{A}(b^n, e^n)} \left\{b^n + \sum\nolimits_{f = 0}^{a} P^{f}(f | a, h^n)\hat{V}^n(b^n - f, e^n - a \cdot e_{TX}, h^n)\right\}
\end{equation}

\State Observe data arrivals $l^n$, energy arrivals $e_H^n$, and next channel state $h^{n + 1}$

\For {all $(\pds{b}, \pds{e}) \in \mathcal{T}^{n}(h^n)$}

\State $\pds{V}^{n + 1}(\pds{b}, \pds{e}, h^n) \leftarrow $ \texttt{update\_PDSV}$\bigl(\hat{V}^{n}, [\pds{b}, \pds{e}, h^n], [l^n, e_H^n, h^{n + 1}], \beta^n\bigr)$
\Comment{Algorithm~\ref{alg:update-PDSV}}

\EndFor

\State $\mathcal{T}^{n + 1}(h^n) \leftarrow $ \texttt{update\_grid}$\bigl(\pds{V}^{n}, \mathcal{T}^{n}(h^n), \delta \bigr)$
\Comment{Algorithm~\ref{alg:DynamicGrid}}

\EndFor
\end{algorithmic}
\end{algorithm}

%%%%%%%%%%%%%%

\begin{algorithm}[!htb]
\caption{Approximate the PDS value function (\texttt{approximate\_PDSV})} \label{alg:approximate-PDSV}
\begin{algorithmic}[1]
\State \textbf{input} $\pds{V}$, $\mathcal{T}$, and $(b,e)$
\State Associate $(b,e)$ with $\mathcal{T}$'s nearest leaf node using a recursive search
\State Further associate $(b,e)$ with the leaf node's closest triangle as in Fig.~\ref{fig:find-plane}
% \State \textbf{return} 
% \If {$(b,e)$ lies inside $\mathcal{T}$'s bounding box}
% 	\For {each leaf $\mathcal{T}_{\ell} \in \texttt{leaves}(\mathcal{T})$}
%     	\If {$(b,e)$ lies inside $\mathcal{T_{\ell}}$'s bounding box}
% 			\State Select approximating plane
%         	\Comment Northwest or southeast plane
%         	\State \textbf{return} $\hat{V}(b,e,h)$ as defined in~\eqref{eq:calc-approx-PDSV}
% 		\EndIf
% 	\EndFor
% \Else
% 	\State Select approximating plane
% 	\Comment Outside of $\mathcal{T}$'s bounding box
	\State Calculate $\hat{V}(b,e,h)$ from $\pds{V}$ as in~\eqref{eq:calc-approx-PDSV}
	\State \textbf{return} $\hat{V}(b,e,h)$ 
%\EndIf

\end{algorithmic}
\end{algorithm}

%%%%%%%%%%%%%%

\begin{algorithm}[!htb]
\caption{Dynamic Grid Update (\texttt{update\_grid})}
\label{alg:DynamicGrid}
\begin{algorithmic}[1]
\State \textbf{input} $\pds{V}$, $\mathcal{T}$, and $\delta$
\For {each leaf $\mathcal{T}_{\ell} \in \texttt{leaves}(\mathcal{T})$}
\State Calculate the approximation error $\delta_{\ell}$ as in \eqref{eq:error}
\EndFor
\State $\delta_{\max} \leftarrow \max_{\ell} \delta_{\ell}$ and $\ell_{\max} \leftarrow \argmax_{\ell} \delta_{\ell}$
\If {$\delta_{\max} > \delta$}
\State Subdivide quadtree $\mathcal{T}_{\ell_{\max}}$ and add to $\mathcal{T}$
\EndIf
\State \textbf{return} $\mathcal{T}$
\end{algorithmic}
\end{algorithm}

%%%%%%%%%%%%%

\vspace{-6pt}
\subsection{Complexity Analysis}
Table \ref{tab:complexity-analysis} compares the action selection, learning update, and grid update complexities of the proposed \textit{Grid Learning} (Algorithm \ref{alg:grid-learning}) algorithm against the state-of-the-art \textit{PDS Learning} (Algorithm \ref{alg:pds-learning}) and \textit{Virtual Experience} (Algorithm \ref{alg:ve-learning}) algorithms. Note that the grid complexity is not defined for the \textit{PDS learning} and \textit{Virtual Experience} learning algorithms, as they do not include a grid update step. In the subsequent discussion, let $|\mathcal{S}|$ and $|\mathcal{A}|$ denote the set of states and actions respectively; let $|\mathcal{S}_b|$, $|\mathcal{S}_e|$ and $|\mathcal{S}_h|$ denote the number of buffer, energy, and channel states, respectively; and let $|\mathcal{L}|$, $|\mathcal{E}|$ and $|\mathcal{F}|$ denote the size of supports for the data packet arrival, energy packet arrival, and goodput distributions, respectively.

The action selection complexity of PDS learning is $\mathcal{O}(|\mathcal{F}||\mathcal{A}|)$ as, from \eqref{eq:pds-learning-greedy}, it needs to iterate over the goodput to calculate the value and also over all possible actions to find the best action. 
The learning update complexity as calculated from \eqref{eq:pds-evaluate-V} is also $\mathcal{O}(|\mathcal{F}||\mathcal{A}|)$ for similar reasons.

For the Virtual Experience algorithm described in Algorithm \ref{alg:ve-learning}, the action-selection complexity is the same as that of PDS learning, i.e., $\mathcal{O}(|\mathcal{F}||\mathcal{A}|)$. To compute the learning update complexity, we introduce a new notation, $|\Pi| = |\mathcal{S}_b \times \mathcal{S}_e|$, which denotes the total number of buffer-battery state pairs. Since virtual experience learning proceeds similar to PDS learning, but updates all buffer-battery pairs in each iteration, the per-step learning update complexity of the virtual experience algorithm evaluates to be $\mathcal{O}(|\Pi| |\mathcal{F}| |\mathcal{A}|)$.

The proposed \textit{Grid learning} algorithm features similar complexity to the Virtual Experience learning algorithm, save for the differences mentioned in Section \ref{sec:appoximation}. Thus, the space of points directly evaluated is reduced to the quadtree, $\mathcal{T}$. Additionally, Algorithm \ref{alg:approximate-PDSV} introduces a worst-case complexity of $\mathcal{O}(k)$ to determine the approximate value of a $(\pds{b}, \pds{e})$ pair in a quadtree with maximum depth $k$. Thus, the per-iteration complexity of the \textit{grid learning} algorithm is $\mathcal{O}(k|\mathcal{T}| |\mathcal{F}| |A|$), and the additional complexity  per call of the \textit{update\_grid} method is $ k |\mathcal{T}|$ to check if the quadtree needs to be subdivided further.

%%%%%%%%%%%% TABLE %%%%%%%%%%%%%%%%%%%

\begin{table*}%[h!]
\centering
\caption{Action-Selection and Iteration complexity of several learning algorithms} 
\label{tab:complexity-analysis}
\begin{tabu}  to 1.0\textwidth { | X[c] | X[c] | X[c] | X[c] |}
\hline
\textbf{Algorithm} & \textbf{Action Selection Complexity} & \textbf{Iteration Complexity} & \textbf{Grid Update Complexity}\\
\hline %\hline
PDS Learning & $\mathcal{O}(|\mathcal{F}| |\mathcal{A}|)$ & $\mathcal{O}(|\mathcal{F}||\mathcal{A}|)$ & -\\ 
\hline
Virtual Experience Learning & $\mathcal{O}(|\mathcal{F}||\mathcal{A}|)$ & $\mathcal{O}(|\Pi||\mathcal{F}||\mathcal{A}|)$ & - \\
\hline
Grid Learning & $\mathcal{O}(k|\mathcal{F}| |\mathcal{A}|)$ & $\mathcal{O}$($k|\mathcal{T}| |\mathcal{F}| |\mathcal{A}|)$ & $k|\mathcal{T}|$ \\
\hline 
\end{tabu}
\vspace{-0.2cm}
\end{table*}

\section{Simulation Results}\label{sec:sim}

We now present our simulation results. In Section~\ref{subsec:sim-setup}, we describe the simulation setup. In Section~\ref{subsec:learning-algo-comparison}, we compare the proposed grid learning algorithm against Q-learning, PDS learning, virtual experience learning, and the optimal policy. Finally, in Section~\ref{subsec:learning-approx-results}, we explore how the approximation error threshold affects learning performance and study the behavior of our adaptive grid refinement algorithm. 

\vspace{-6pt}
\subsection{Simulation Setup}\label{subsec:sim-setup}

The simulation parameters used in our MATLAB-based simulator are described in Table \ref{tab:simulation-parameters}. We assume that the buffer and battery have sizes $N_b = 32$ data packets and $N_e = 32$ energy packets, respectively, and that there are $N_h = 8$ channel states with PLRs $q(h) = 0.1, 0.2, \ldots, 0.8$. This yields a large state space comprising a total of $(N_b+1) \times (N_e+1) \times N_h = 8712$ states. We assume that the channel fading state is known to the transmitter at the beginning of each time slot; however, the Markovian channel transition probability function, $P^h(\cdot|h)$, is unknown a priori. We further assume that the data and energy packet arrival distributions, $P^l(\cdot)$ and $P^{e_H}(\cdot)$, respectively, are Bernoulli, but are unknown a priori. Finally, we set the discount factor $\gamma = 0.98$ to balance present and expected future costs and to optimize the long term behavior of the scheduling policy.

%%%%%%%%%%%%%%%%%%%%%%%%%%%%%%%%%%%%%%%%%%%%%%%%%%%%%%%%%%%%%%%%%%%%%%%
%%%%%%%%%%%%%%%%%%%%%%%%%%% PARAMETERS TABLE %%%%%%%%%%%%%%%%%%%%%%%%%%
%%%%%%%%%%%%%%%%%%%%%%%%%%%%%%%%%%%%%%%%%%%%%%%%%%%%%%%%%%%%%%%%%%%%%%%
\begin{table}[!ht]
\centering
\caption{Simulation parameters.} 
\label{tab:simulation-parameters}
\begin{tabu}  to  \textwidth { | X[c] | X[c] | X[c] | X[c]| }
\hline
Parameter & Value & Parameter & Value \\
\hline \hline
Packet Buffer Size, $N_b$ & 32 & Transmit Action, $a$ & $\left\lbrace 0, 1 \right\rbrace$ \\ 
\hline
Battery Size, $N_e$ & 32 & Packet Transmit Energy, $e_{\text{TX}}$ & 1 \\
\hline
Channel States $h \in \mathcal{H}$ & $\left\lbrace 1, 2, \ldots, 7, 8 \right\rbrace$ & Discount Factor, $\gamma$ & 0.98 \\
\hline
Error Rate, $q(h)$ & $\left\lbrace 0.8, 0.7, \ldots , 0.1 \right\rbrace$ & Simulation Duration (slots) & 50,000 \\ 
\hline
Packet Arrivals (pkts/slot) & $\left\lbrace 0, 1 \right\rbrace$ & VE Update period, $T_{\text{VE}}$ & 10%$\left\lbrace 1, 10 \right\rbrace$ 
\\
\hline
Energy Arrivals (pkts/slot) & $\left\lbrace 0, 1 \right\rbrace$ & Grid Update Period, $T_{\text{grid}}$ & $\left\lbrace 10, 50, 100 \right\rbrace$ \\
\hline
Data Packet Arrival Distribution, $P^l(l)$ & $\text{Bern}(p), p \in \left\lbrace 0.1,0.2, \ldots, 0.6 \right\rbrace$ & Approximation Error Threshold, $\delta$ & $\left\lbrace 5, 7.5, 10, \ldots, 45 \right\rbrace$ \\
\hline
Energy Packet Arrival Distribution, $P^{e_H}(e_H)$ & $\text{Bern}(0.7)$ & Packet Overflow Penalty, $\eta$ & 50 \\
\hline
\end{tabu}
\end{table}

%%%%%%%%%%%%%%%%%%%%%%%%%%%%%%%%%%%%%%%%%%%%%%%%%%%%%%%%%%%%%%%%%%%%%%%%%
%%%%%%%%%%%%%%%%%%%%%%%%%%% END OF TABLE %%%%%%%%%%%%%%%%%%%%%%%%%%%%%%%%
%%%%%%%%%%%%%%%%%%%%%%%%%%%%%%%%%%%%%%%%%%%%%%%%%%%%%%%%%%%%%%%%%%%%%%%%%

% \subsection{Virtual Experience Results}
% In this section, we present the simulation results using Virtual % Experience learning as described in Algorithm \ref{algo:ve-learning}. The simulation parameters are the same as described in Table \ref{tab:simulation-parameters}. 

%Figure \ref{fig:ve-pdsv} illustrates the PDS value function from Virtual Experience as a bar plot. We can observe from the figure that the PDS value function is increasing and has increasing differences in the buffer state and is non-decreasing in the energy state. 

%%%%%%%%%%%%%%%%%%%%%%%%%%%%%%%%%%%%%%%%%%%%%%%%%%%%%%%%%%%%%%%%%%%%%%%%%
%%%% WHAT OTHER PARAMETERS TO MENTION IN VIRTUAL EXPERIENCE SECTION? %%%%
%%%%%%%%%%%%%%%%%%%%%%%%%%%%%%%%%%%%%%%%%%%%%%%%%%%%%%%%%%%%%%%%%%%%%%%%%

\vspace{-6pt}
\subsection{Learning Algorithm Comparison}\label{subsec:learning-algo-comparison}

We implement the Q-learning algorithm as described in~\cite{sutton1998reinforcement}, PDS learning algorithm as described in Section \ref{sec:pds-learning} and Algorithm~\ref{alg:pds-learning}, the VE learning algorithm as described in Section~\ref{sec:ve-learning} and Algorithm~\ref{alg:ve-learning}, and the grid learning algorithm as described in Section~\ref{sec:2d-grid-update} and Algorithm~\ref{alg:grid-learning}. Simulation results using the parameters summarized in Table \ref{tab:simulation-parameters} are presented in Fig. \ref{fig:algo-comparison} for numerous simulations with duration 50,000 time slots, data packet arrival distribution $P^l(l) \sim \text{Bern}(0.4)$, energy packet arrival distribution $P^{e_H}(e_H) \sim \text{Bern}(0.7)$, error tolerance $\delta = 10$, and initial states $b^0 = e^0 = 0$. 

%The figures are presented for 3 different parameters, average buffer occupancy, average battery occupancy and average overflows, for a duration of 50,000 time slots. 

% \begin{figure}[ht]
% \centering
%   \includegraphics[clip, trim = 2cm 2.5cm 2cm 2.5cm, width=0.8\linewidth]{Figures/Algo-comparison-Pl-0-4-pdf-fill-new.pdf}
%   \caption{Virtual Experience Learning vs Grid Learning}
%   \label{fig:simulation-results-algo-comparison}
%   \vspace{-0.5cm}
% \end{figure}

\begin{figure}[!htb]
	\centering
    
    \begin{subfigure}{0.48\textwidth}
    	\centering
        \vspace{-3.5cm}
    	\includegraphics[width=3.2in,trim={1.4cm 1cm 1cm 1cm},clip]{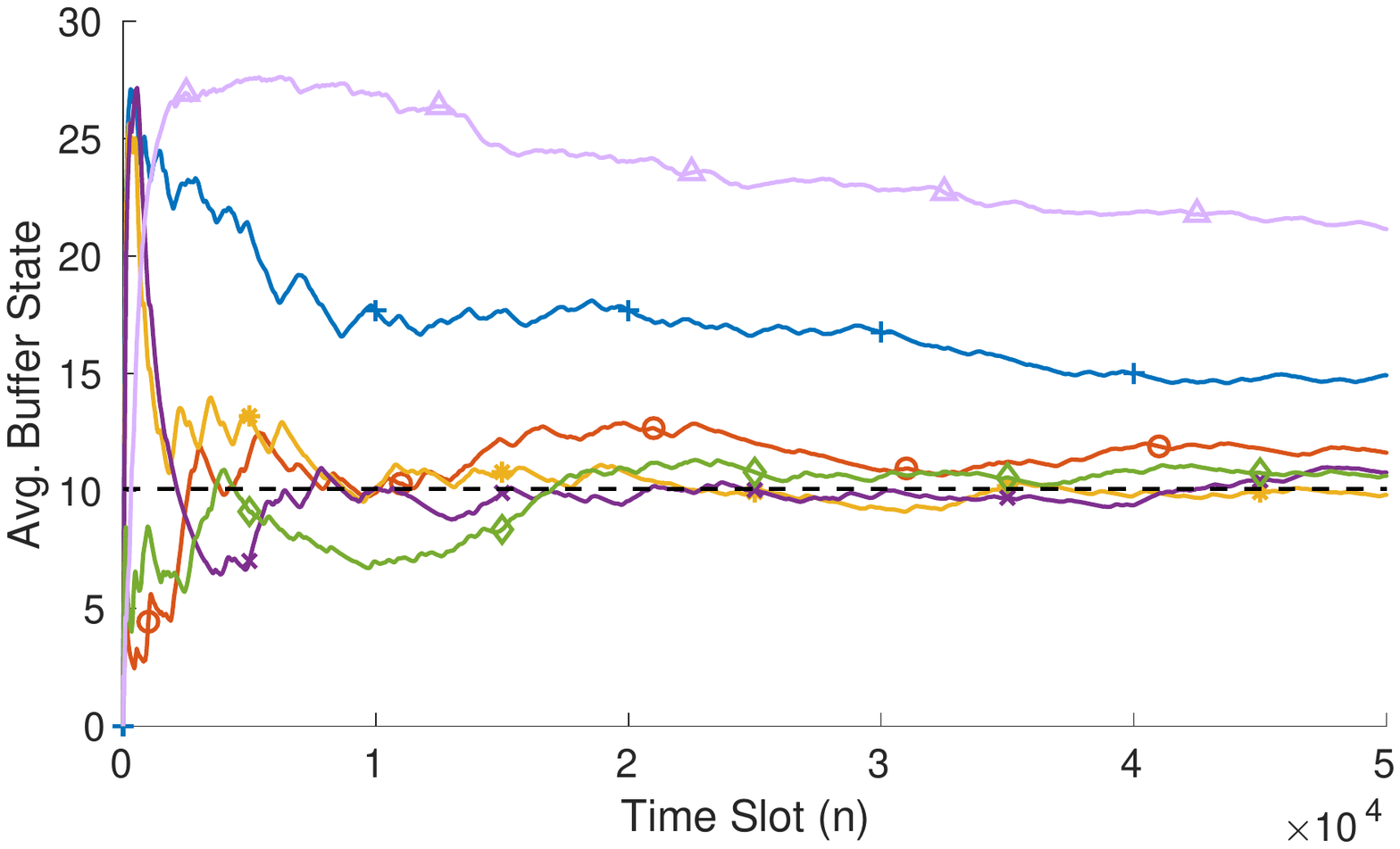}
       \vspace{-4cm}
    	\subcaption{Average Buffer Occupancy vs. Time}
        \label{fig:algo-comparison-buffer}
  	\end{subfigure}
    \begin{subfigure}{0.48\textwidth}
    	\centering
        \vspace{-3.5cm}
    	\includegraphics[width=3.2in,trim={1.4cm 1cm 1cm 1cm},clip]{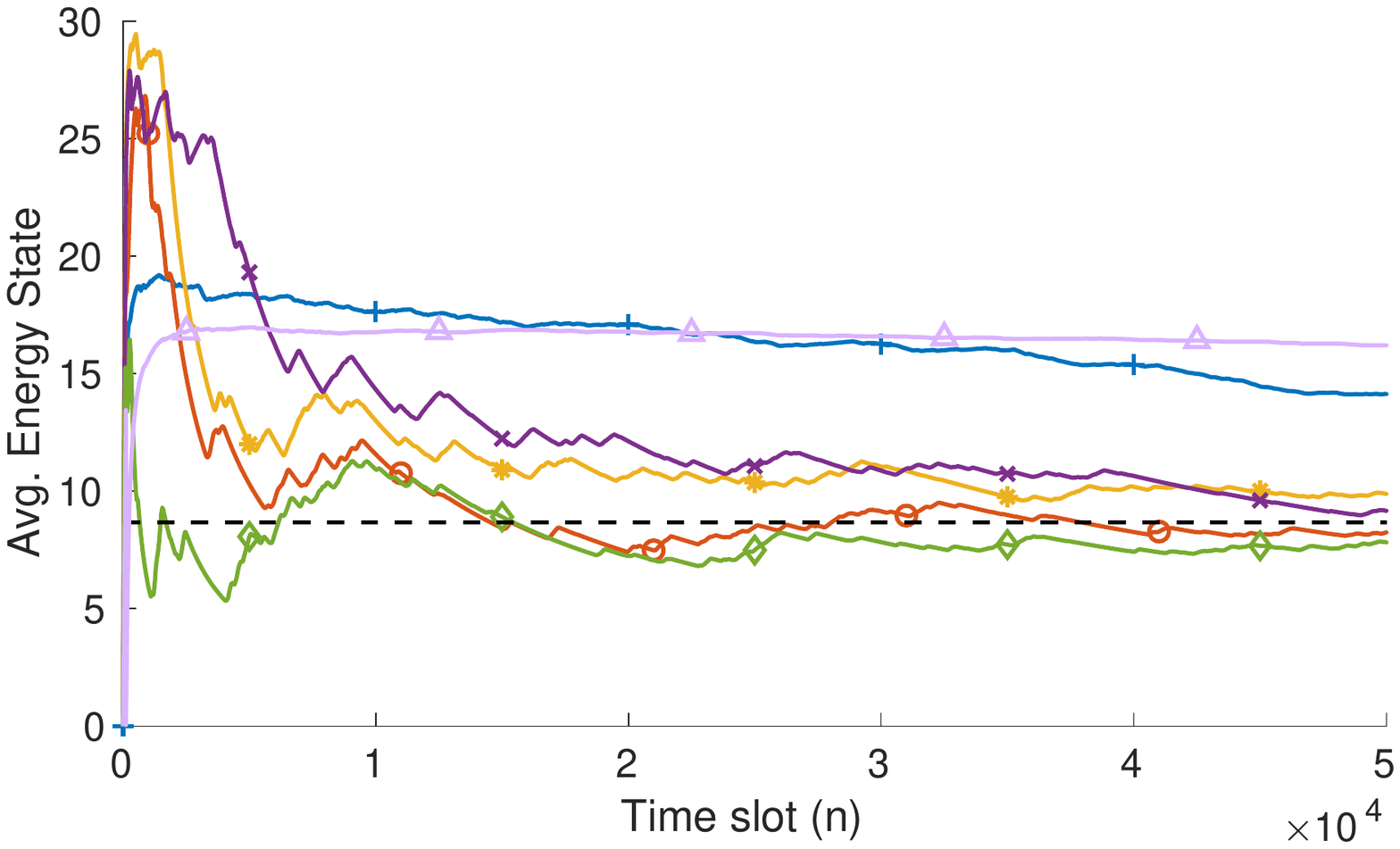}
        \vspace{-4cm}
    	\subcaption{Average Battery Occupancy vs. Time}
        \label{fig:algo-comparison-battery}
  	\end{subfigure}
    
    \begin{subfigure}{0.48\textwidth}
    	\centering
        \vspace{-3cm}
    	\includegraphics[width=3.2in,trim={1.4cm 1cm 1cm 1cm},clip]{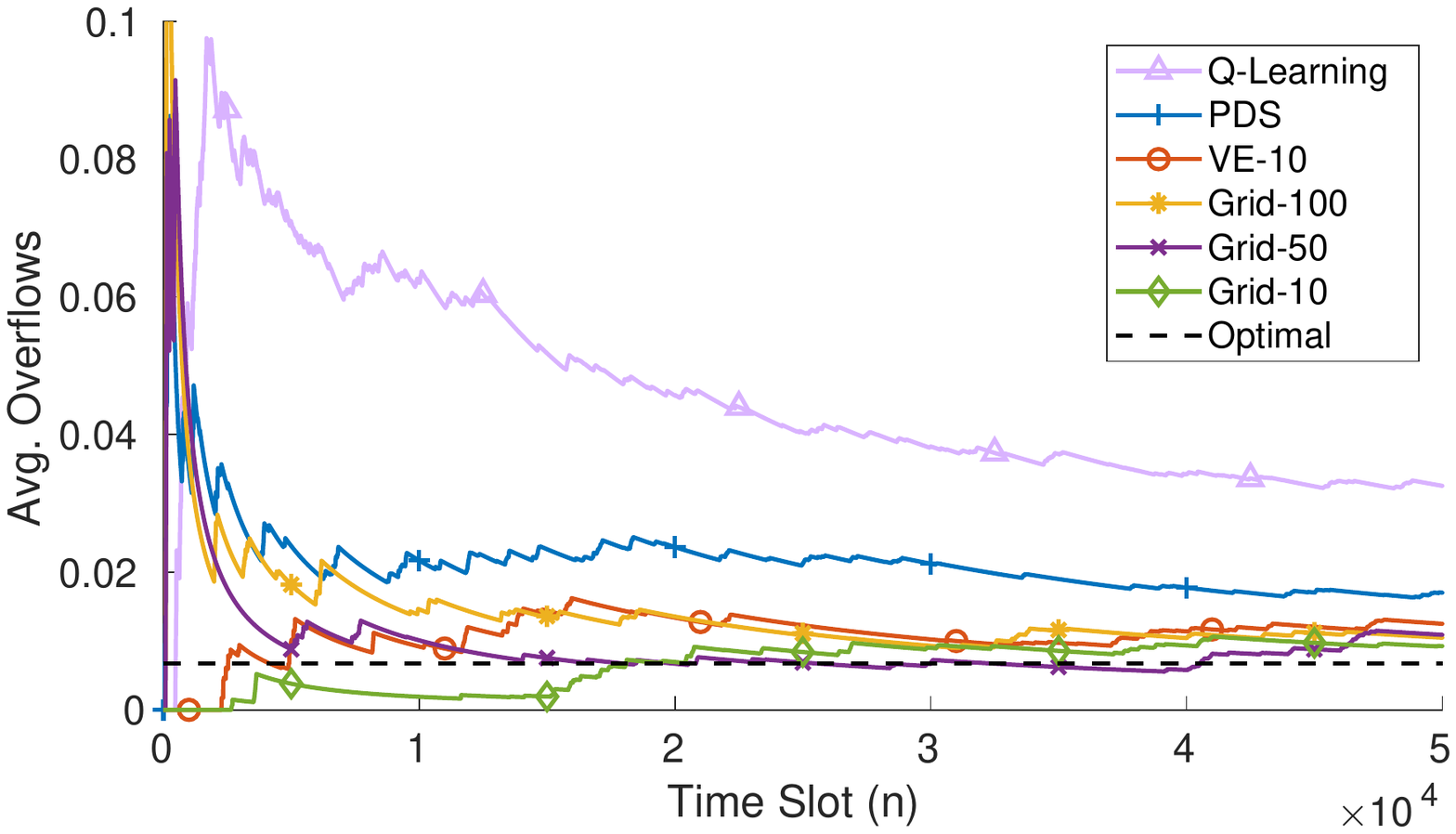}
        \vspace{-4cm}
    	\subcaption{Average Overflows vs. Time}
        \label{fig:algo-comparison-overflows}
  	\end{subfigure}
\caption{Performance comparison of the grid, PDS, and virtual experience learning algorithms.}
\label{fig:algo-comparison}
\vspace{-0.5cm}
\end{figure}

%Each plot in Fig. \ref{fig:algo-comparison} compares the performance of the proposed grid learning algorithm (with updates every $T_{\text{grid}} = 10, 50, 100$ time slots) against Q-learning, PDS learning, VE learning (with updates every $T_{\text{VE}}=10$ time slots), and the optimal policy. 
%The curves labeled "VE" are obtained using Virtual Experience learning algorithm with updates every 10 time slots. %(we do not use an update frequency of 1 as the algorithm becomes too complex and takes a long time to run). 
In Fig. \ref{fig:algo-comparison}, the curves labeled ``Grid-$T$'' are obtained using the grid learning algorithm with updates every $T = 10, 50, 100$ time slots; the curve labeled ``VE-10'' is obtained using the VE learning algorithm with updates every 10 time slots; and, the curves labeled ``Q-learning,'' ``PDS,'' and ``Optimal'' are obtained from the Q-learning algorithm, PDS learning algorithm, and optimal policy, respectively. Fig. \ref{fig:algo-comparison-buffer} illustrates the average buffer occupancy versus time; Fig. \ref{fig:algo-comparison-battery} illustrates the average battery occupancy versus time; and Fig. \ref{fig:algo-comparison-overflows} illustrates the average buffer overflows versus time.

%The data packet and energy packet arrival distributions are assumed to be Bernoulli as described in Table \ref{tab:simulation-parameters}. % The channel transition matrix is a Toeplitz matrix with initial seed [0.4 0.3].
The Q-learning algorithm predictably performs worse than the other algorithms. This is due due to the fact that: 1) it requires action exploration~\cite{kaelbling1996reinforcement,sutton1998reinforcement}, so it frequently chooses sub-optimal actions even if it has found the optimal action; and 2) it can only learn about one state-action pair in each time slot.   
%This is due, in part, to the action exploration step that requires it to occasionally choose sub-optimal actions even after it has found the optimal action. 
%As observed in Figure~\ref{fig:algo-comparison}, the steady-state values for the system states are way off from the optimal values computed using the offline policy.
The PDS learning algorithm, although better than Q-learning, also takes an unacceptably long time to converge to the optimal solution because it can only learn about one PDS in each time slot. 
%and must actually visit a PDS to learn about it. 
We observe that ``Grid-10'' achieves comparable performance to both ``Optimal' and ``VE-10'' in under 20,000 time slots. Importantly, the grid learning algorithm achieves this by updating 93\% fewer states at a time compared to VE learning (at most 69 states for grid learning versus $(N_b+1) \times (N_e+1) = 1089$ for VE learning) and without any a priori knowledge about the channel, data arrival, and energy harvesting dynamics as is required to compute the optimal solution.
%The algorithm obtains approximately optimal performance and is very close to the steady state value at that point. 
Owing to this, a near-optimal transmission policy can be efficiently learned online on an EHS. 
%for wireless energy challenged sensors can be computed online quickly provided enough computational resources. 
%We also compare the performance of the proposed algorithm for other update periods, i.e., $T_{\text{grid}} = 50, 100$. 
Both ``Grid-50'' and ``Grid-100'' achieve near-optimal performance that is comparable to VE learning within 50,000 time slots. Intuitively, grid learning performs better with more frequent updates.

Fig.~\ref{fig:algo-comparison} also reveals how the system evolves over time. Since the learning algorithms have no a priori knowledge of the dynamics, they operate with suboptimal policies until they gain sufficient experience through their interactions with the environment. This leads to an initial surge in the buffer occupancy, battery occupancy, and buffer overflows, as the EHS harvests energy from the environment, but has not yet learned when to transmit data packets. Q-learning and PDS learning perform particularly poorly in this ``cold start'' phase because, unlike VE and grid learning, they have to actually experience large backlogs and packet overflows to learn how to avoid them.
%Since underflows and overflows suffer a huge cost, the system harvests both data packets and energy packets which can be seen in the initial surge in the average buffer state and average energy state plots in Fig. \ref{fig:algo-comparison}. 
%As the number of harvested data packets increases, the system suffers increasing holding cost and starts transmitting packets thereby also reducing the average energy state of the system and the system settles close to the steady state values of the parameters.
%The number of overflows in the system also follows the same trend, which can be seen more clearly in Fig. \ref{fig:algo-comparison-overflows}, and settles to less than 1 overflow per 50 time slots for the system, which happens in a high buffer occupancy state with a packet arrival when either insufficient energy is available to the system and it cannot transmit or the system undergoes a failed transmission attempt.
%The learning process takes place more quickly with more frequent grid learning updates, thus resulting in better overall performance by the algorithm. 

%%%%%%%%%%%%%%%% Value functions and Policies for 3 different thresholds %%%%%%%%%%%%%%%%%%

\begin{figure}[p]
	\centering
    \begin{subfigure}{.45\textwidth}
    	\centering
    	\includegraphics[width=0.9\textwidth,trim={1cm 7.5cm 1cm 7cm},clip]{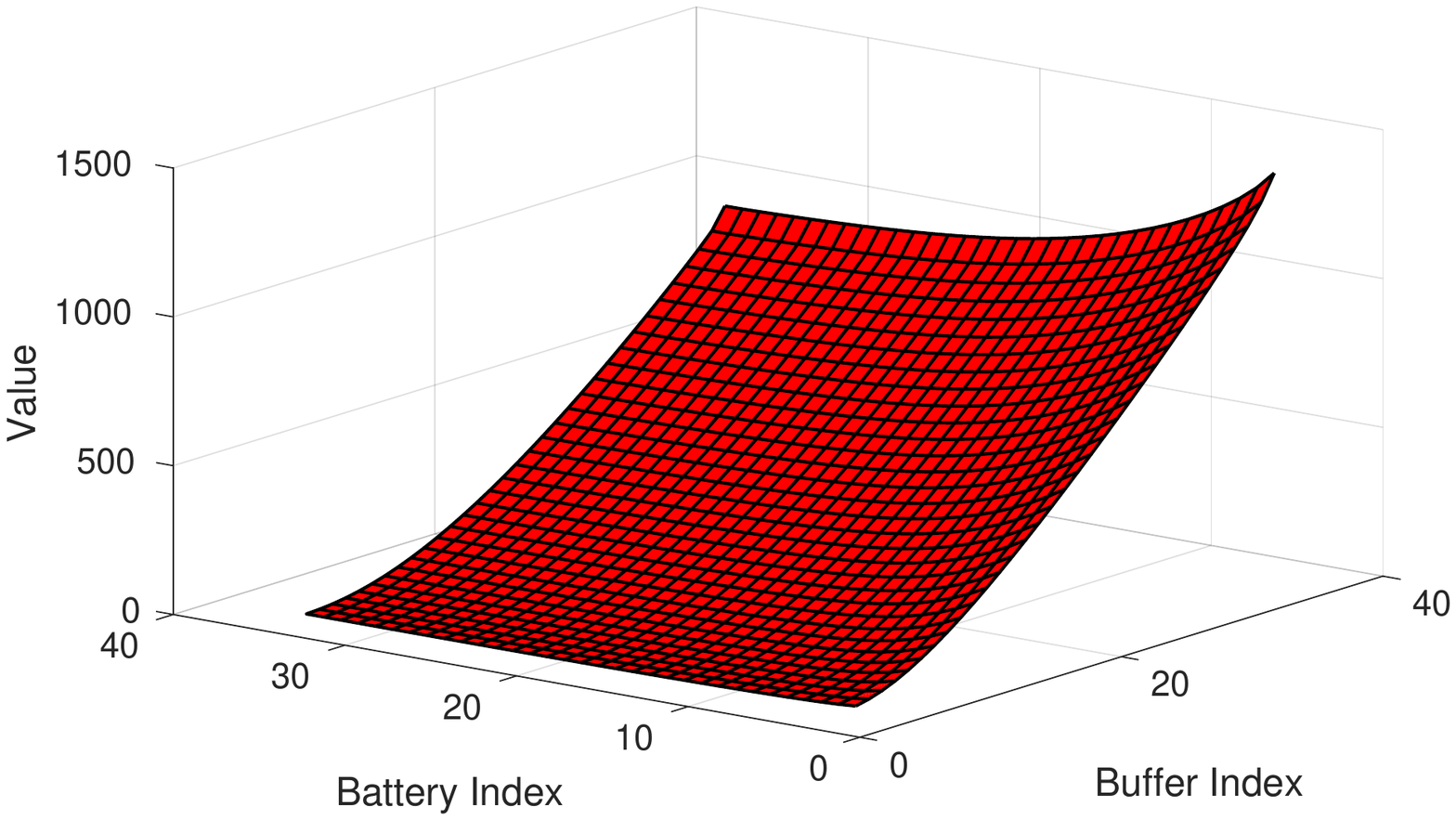}
        \vspace{-8pt}
    	\subcaption{Optimal PDS Value Function ($\delta = 0$)}
        \label{fig:optimal-value-function-surf}
  	\end{subfigure}
	\begin{subfigure}{.45\textwidth}
    	\centering
    	\includegraphics[width=0.9\textwidth,trim={1cm 7.5cm 1cm 7cm},clip]{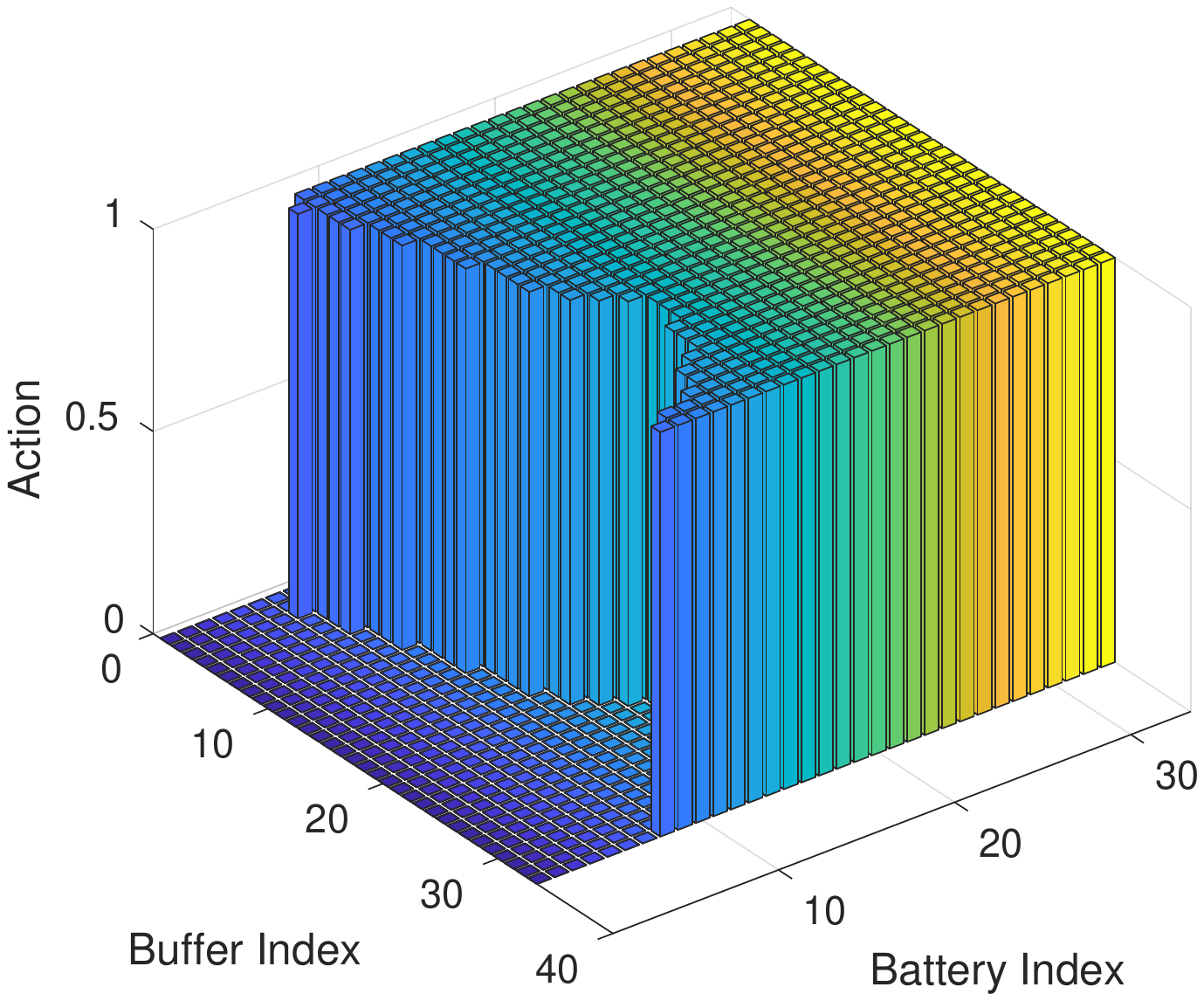}
        \vspace{-8pt}
    	\subcaption{Optimal Policy ($\delta = 0$)}
        \label{fig:optimal-policy}
  	\end{subfigure}
    
    \begin{subfigure}{.45\textwidth}
    	\centering        
    	\includegraphics[width=0.9\textwidth,trim={1cm 8cm 1cm 6cm},clip]{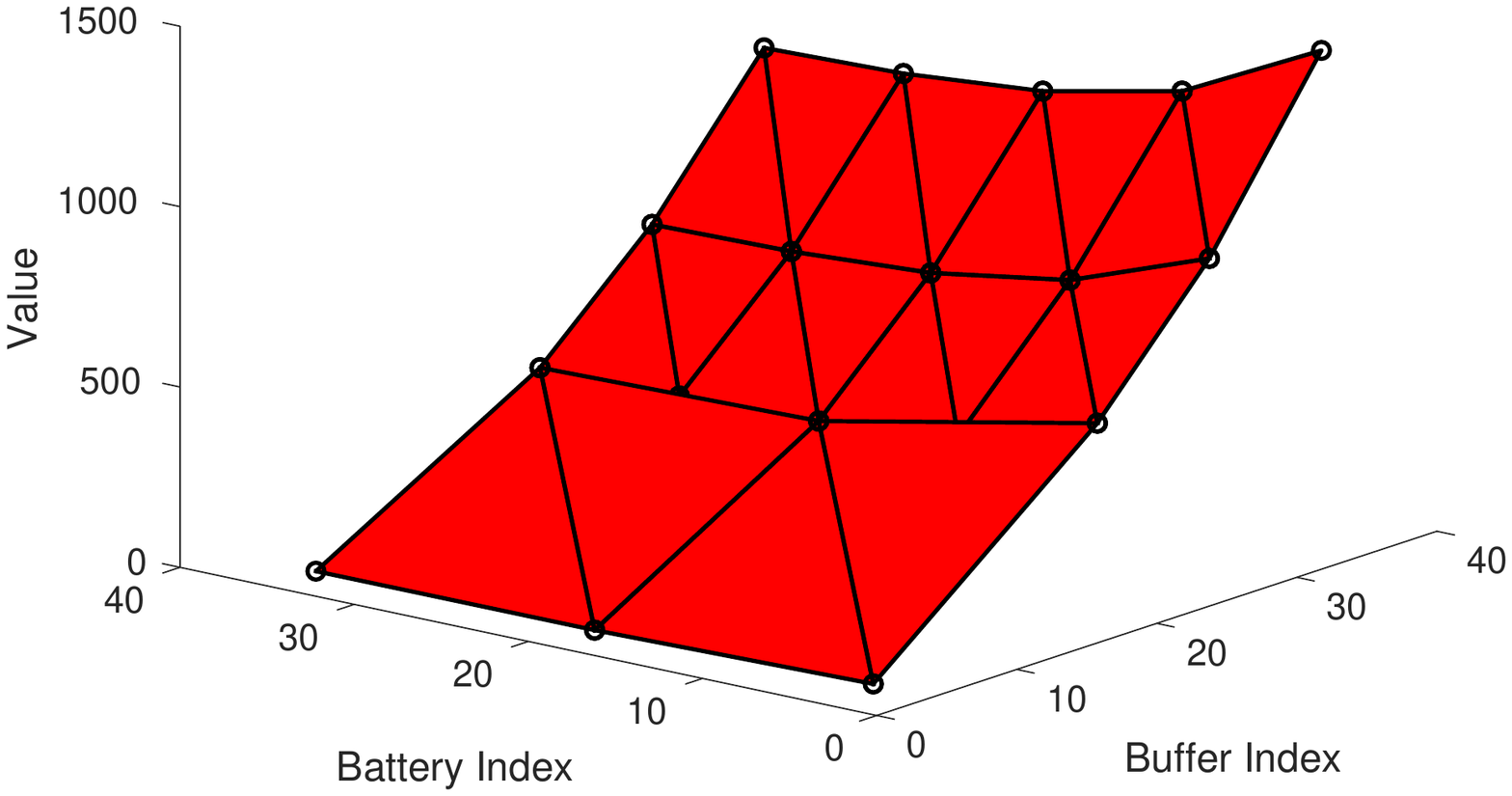}
        \vspace{-8pt}
    	\subcaption{Approximate PDS Value Function ($\delta = 10$)}
        \label{fig:piecewise-planar-value-delta-10}
  	\end{subfigure}
	\begin{subfigure}{.45\textwidth}
    	\centering
    	\includegraphics[width=0.9\textwidth,trim={1cm 7.5cm 1cm 7cm},clip]{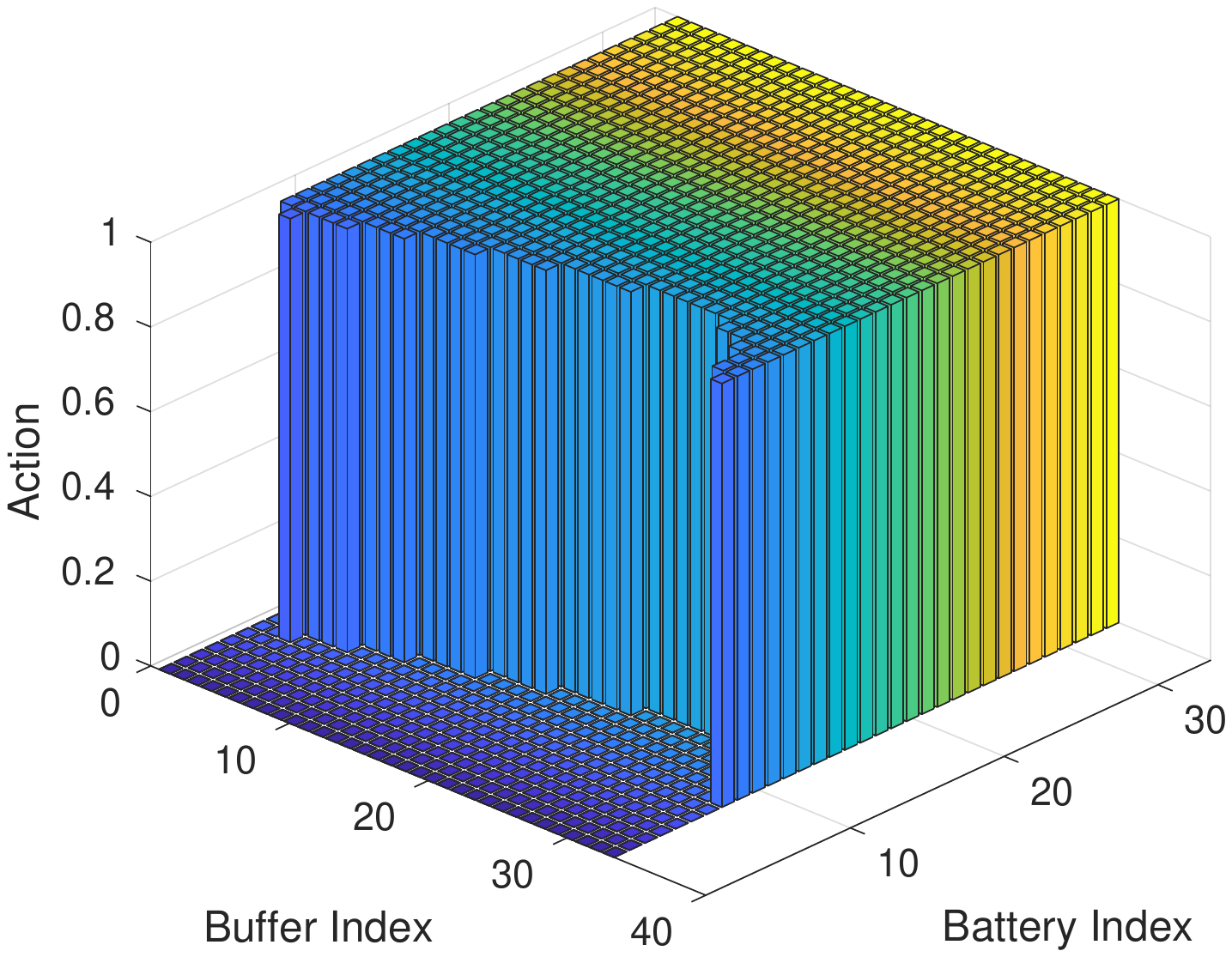}
        \vspace{-8pt}
    	\subcaption{Policy ($\delta = 10$)}
        \label{fig:policy-delta-10}
  	\end{subfigure}
    
  	\begin{subfigure}{.45\textwidth}
  		\centering
      	\includegraphics[width=0.9\linewidth, ,trim={1cm 7.5cm 1cm 7cm},clip]{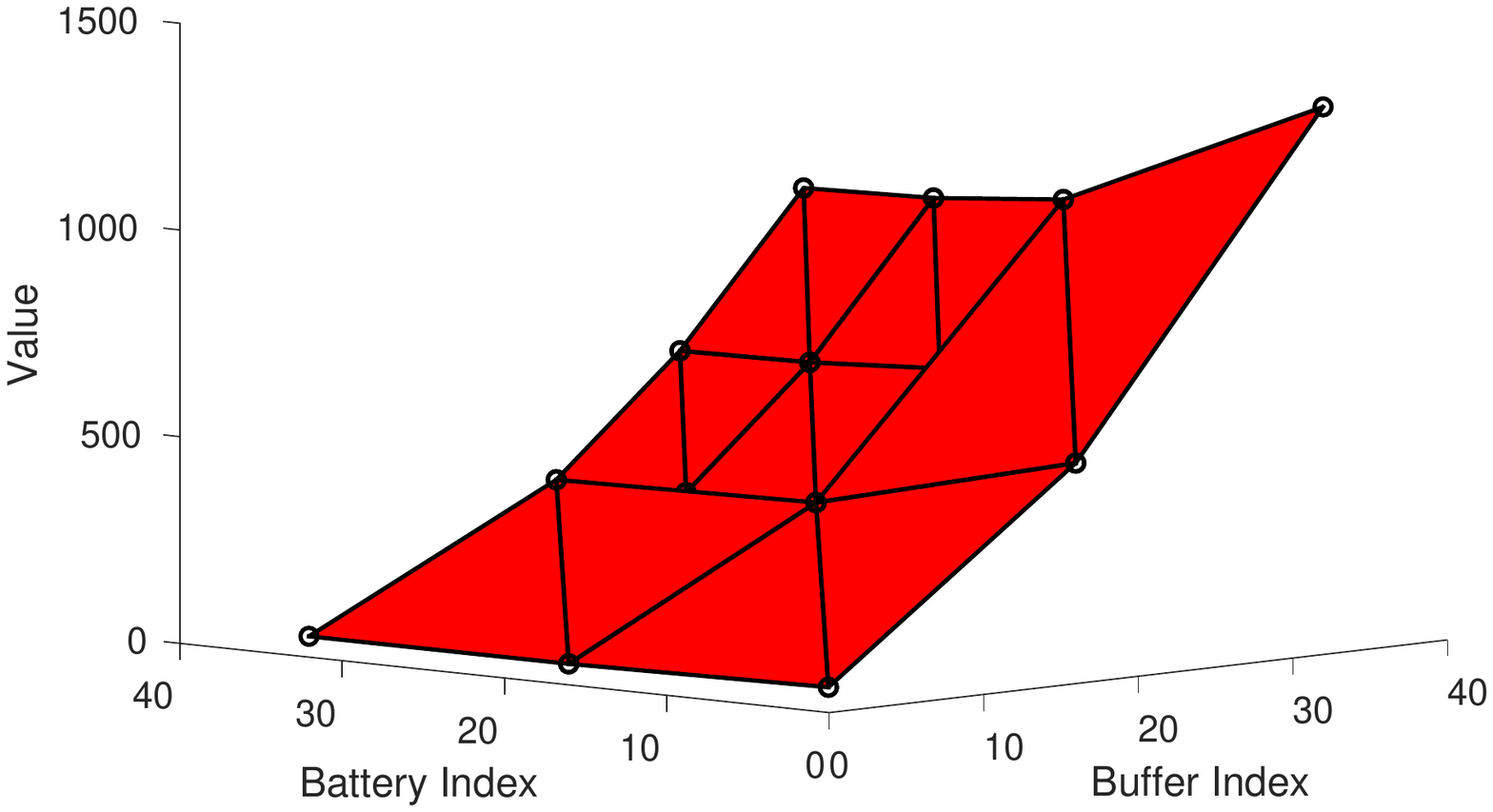}
        \vspace{-8pt}
    	\subcaption{Approximate PDS Value Function ($\delta = 20$)}\label{fig:value-function-delta-20}
  	\end{subfigure}
	\begin{subfigure}{.45\textwidth}
  		\centering
      	\includegraphics[width=0.9\linewidth, ,trim={1cm 7.5cm 1cm 7cm},clip]{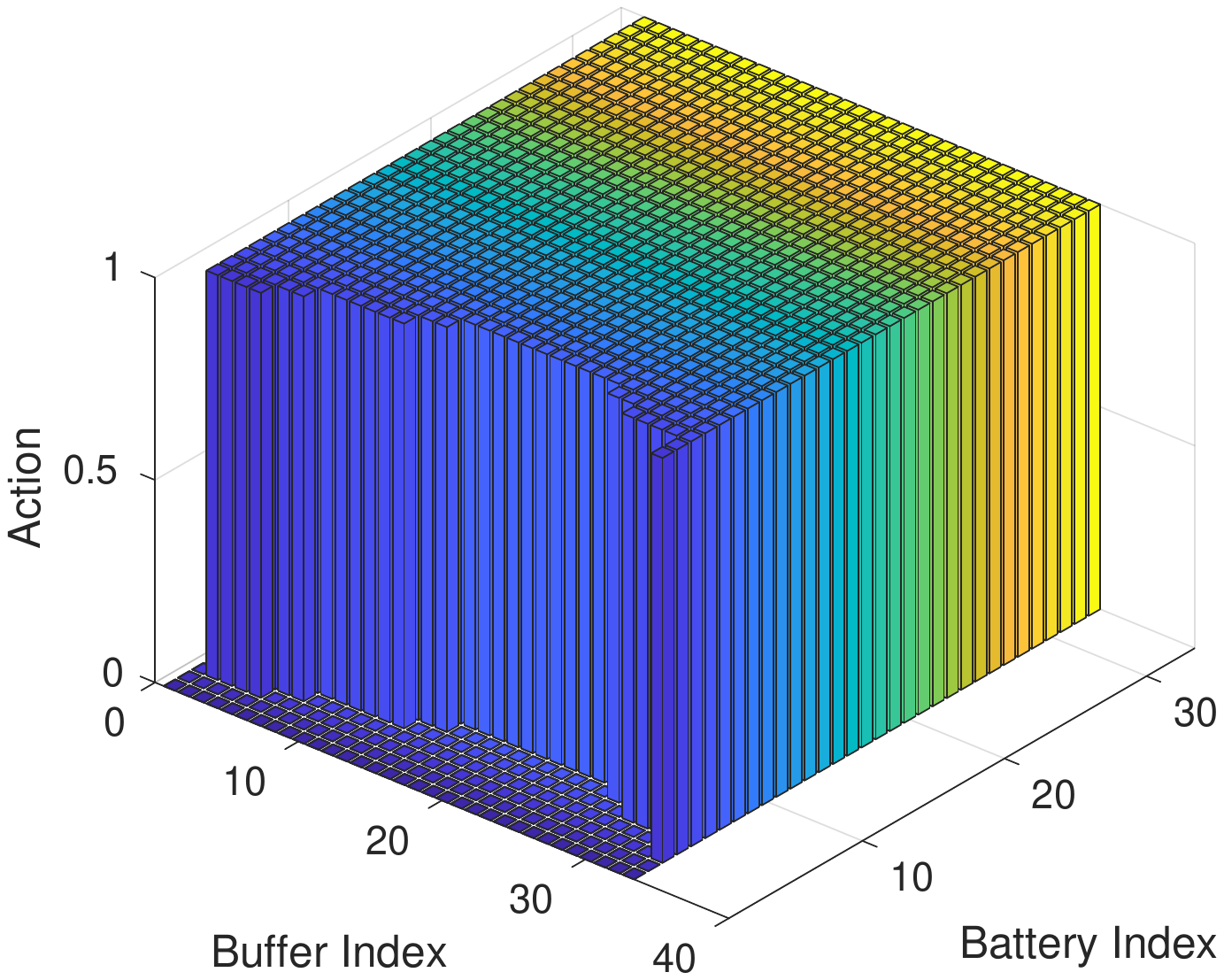}
        \vspace{-8pt}
    	\subcaption{Policy ($\delta = 20$)}\label{fig:policy-delta-20}
  	\end{subfigure}
    
    \begin{subfigure}{.45\textwidth}
  		\centering
      	\includegraphics[width=0.9\linewidth, ,trim={1cm 7.5cm 1cm 7cm},clip]{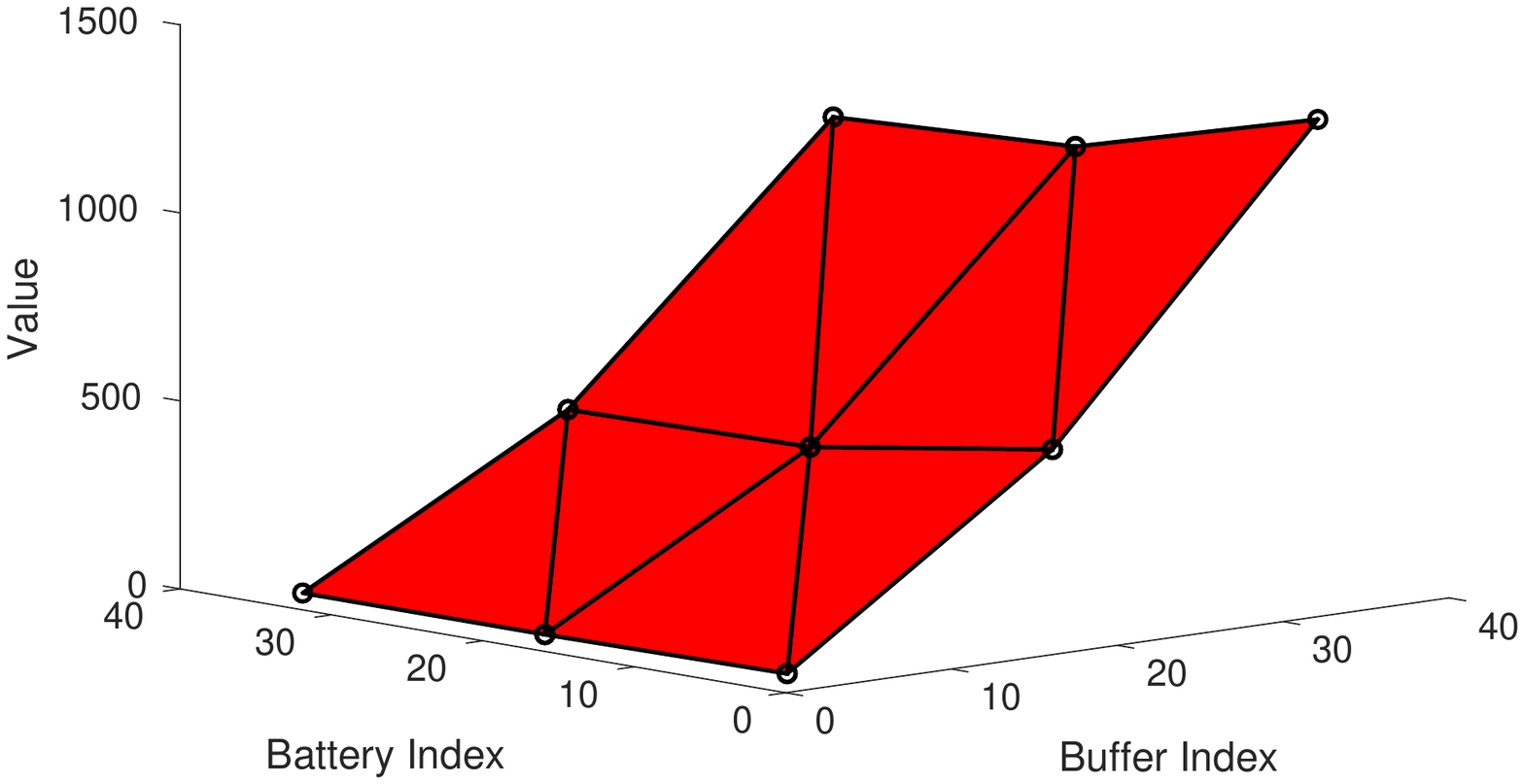}
        \vspace{-8pt}
    	\subcaption{Approximate PDS Value Function ($\delta = 30$)}\label{fig:value-function-delta-30}
  	\end{subfigure}
	\begin{subfigure}{.45\textwidth}
  		\centering
      	\includegraphics[width=0.9\linewidth, ,trim={1cm 7.5cm 1cm 7cm},clip]{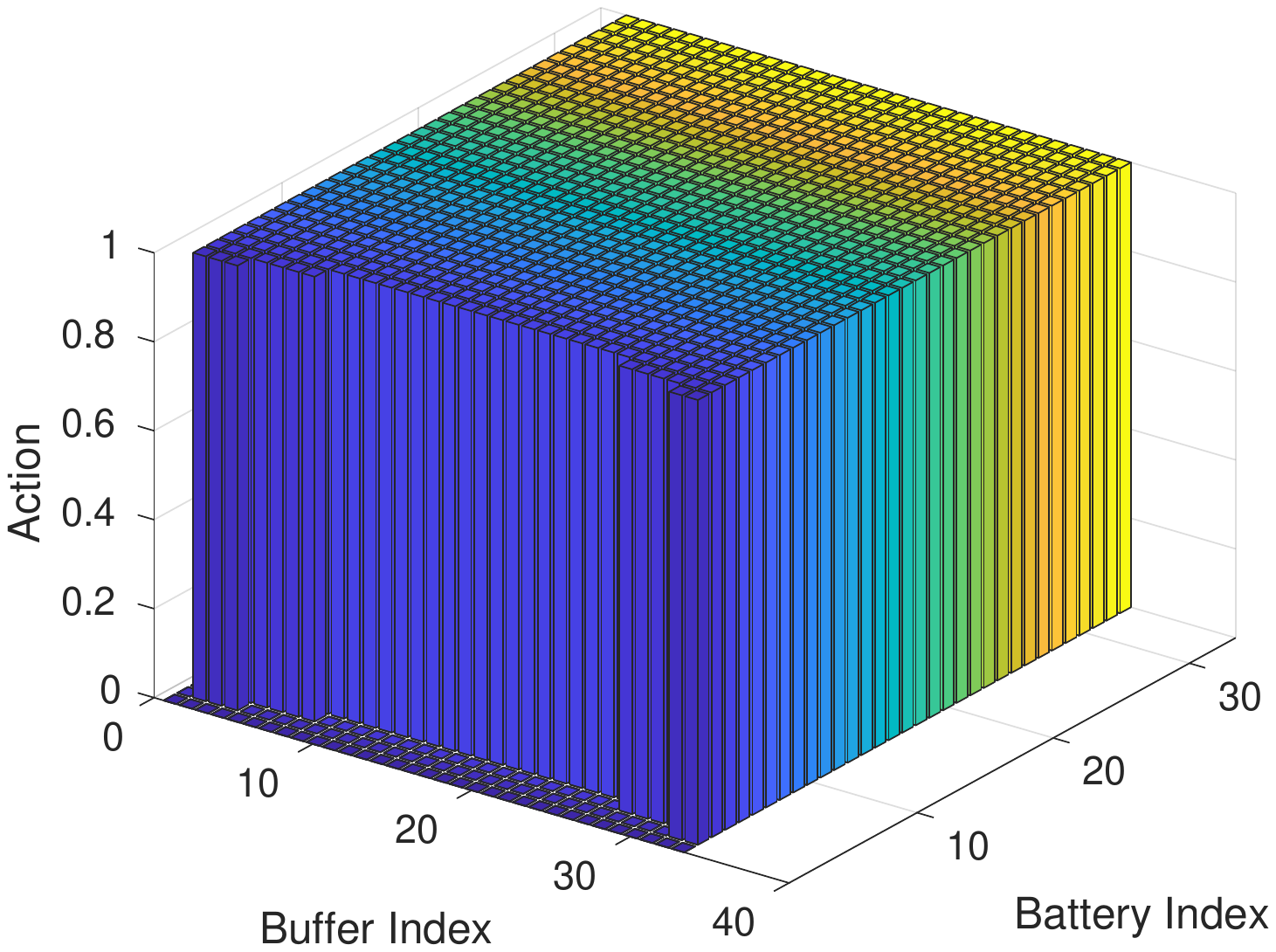}
        \vspace{-8pt}
    	\subcaption{Policy ($\delta = 30$)}\label{fig:policy-delta-30}
  	\end{subfigure}

\caption{PDS value functions and their associated policies for different error thresholds ($q(h) = 0.8$, $P^l(l) \sim \text{Bern}(0.2)$, and $P^{e_H}(e_H) \sim \text{Bern}(0.7)$).}
\label{fig:value-function-and-policy}
\end{figure}

%%%%%%%%%%%%%%%%%%%%%%%%%%%%%%%%%%%%%%%% END %%%%%%%%%%%%%%%%%%%%%%%%%%%%%%%%%%%%%%%%%%%%%%

%%%%%%%%%%%%% Quadtree Subdivision Figures %%%%%%%%%%%%%%%%%%%%%%%%

% \begin{figure}[!ht]
% 	\centering
%     \begin{subfigure}{.45\textwidth}
%     	\centering
%     	\includegraphics[width=0.8\textwidth,trim={1cm 8cm 1cm 6cm},clip]{}
%         \vspace{-.3cm}
%     	\subcaption{Evolution of number of points in the grid}
%         \label{fig:grid-evolution}
%   	\end{subfigure}
% 	\begin{subfigure}{.45\textwidth}
%     	\centering
%     	\includegraphics[width=0.8\textwidth,trim={1cm 7.5cm 1cm 7cm},clip]{}
%     	\subcaption{Original Quadtree (Point (i))}
%         \label{fig:quadtree-subdivision-1}
%   	\end{subfigure}
    
%   	\begin{subfigure}{.45\textwidth}
%   		\centering
%       	\includegraphics[width=0.8\linewidth, ,trim={1cm 7.5cm 1cm 7cm},clip]{}
%     	\subcaption{Quadtree at Point (ii)}\label{fig:quadtree-subdivision-2}
%   	\end{subfigure}
% 	\begin{subfigure}{.45\textwidth}
%   		\centering
%       	\includegraphics[width=0.8\linewidth, ,trim={1cm 7.5cm 1cm 7cm},clip]{}
%     	\subcaption{Quadtree at Point (iii)}\label{fig:quadtree-subdivision-3}
%   	\end{subfigure}

% \caption{Illustration of subdivision of Quadtree}
% \label{fig:quadtree-subdivision}
% \end{figure}

\begin{figure}[!htb]
\centering
  \includegraphics[clip, trim = 1cm 7.5cm 1cm 8cm, width=3.45in]{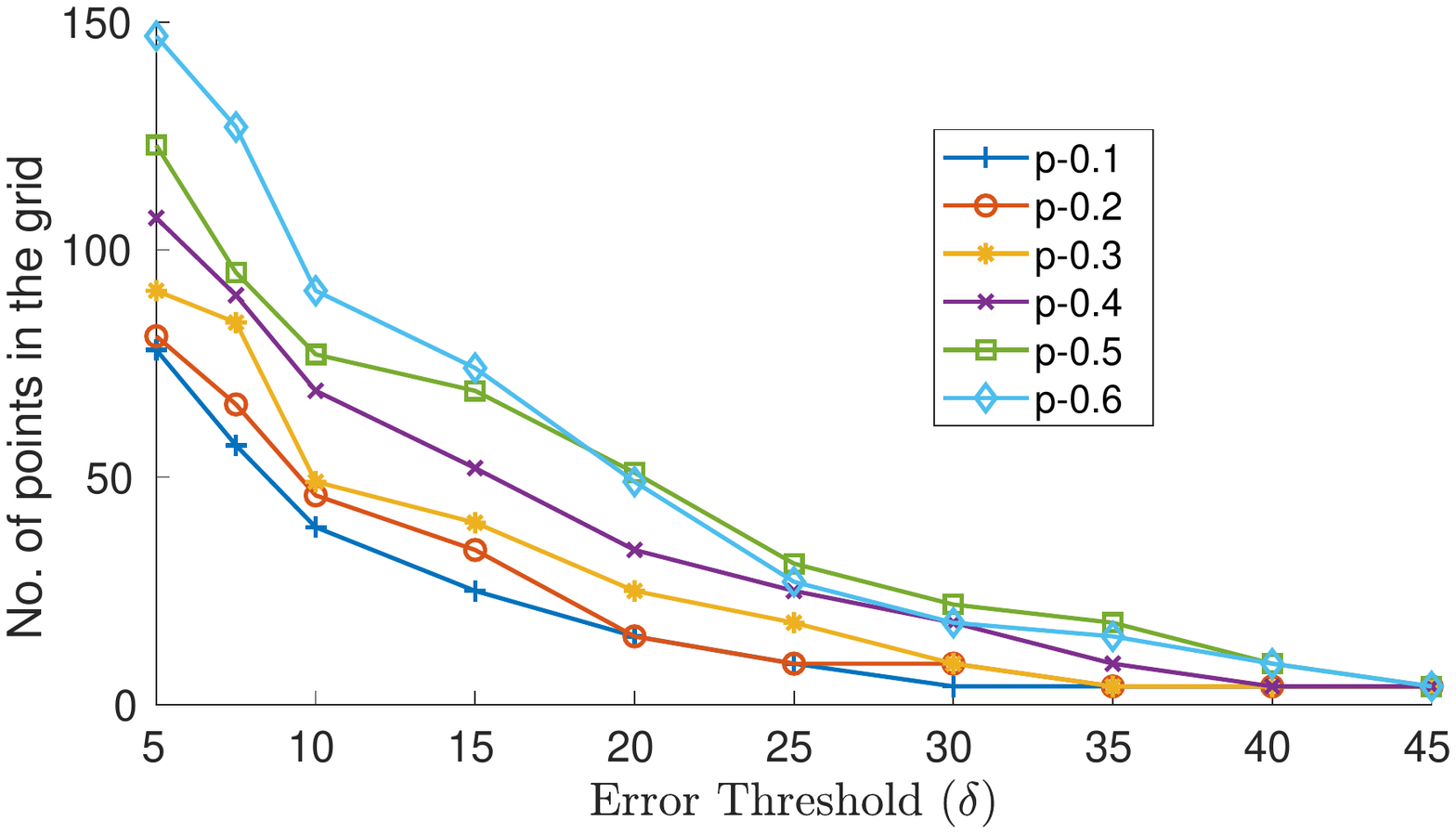}%{}
  \vspace{-0.5cm}
  \caption{Grid Points vs. Error Threshold $\delta$ ($q(h)=0.5$, $P^l(l) \sim \text{Bern}(p)$, and $P_{e_H} \sim \text{Bern}(0.7)$).}
  \label{fig:grid-learning-pts-vs-error}
  \vspace{-0.5cm}
\end{figure}

\vspace{-6pt}
\subsection{Effect of the Approximation Error Threshold} \label{subsec:learning-approx-results}
In this section, we investigate the effect of the approximation error threshold $\delta$ on the grid learning algorithm.
All of the results in this section were taken after 50,000 time slot simulations with grid learning updates applied every $T = 100$ slots.
%we discuss some other key results from our proposed Approximate Reinforcement Learning Algorithm. 

In Fig. \ref{fig:value-function-and-policy}, we compare several approximate PDS value functions ($\delta = 10, 20, 30$) against the optimal PDS value function ($\delta = 0$) in the worst channel state (PLR $q(h) = 0.8$) with data packet arrival distribution $P^l(l) \sim \text{Bern}(0.2)$ and energy packet arrival distribution $P^{e_H}(e_H) \sim \text{Bern}(0.7)$. We also compare their associated policies. 
%The snapshots of the approximate PDS value functions were taken after 50,000 time slots with grid learning updates applied every $T_{\text{grid}} = 100$ slots.
%as approximated by our learning algorithm and the corresponding transmission policies as evaluated by the algorithm; % Fig.~\ref{fig:quadtree-subdivision} illustrates the subdivision of the original quadtree at different points during grid refinement;
In Fig. \ref{fig:optimal-value-function-surf}, we observe that the optimal PDS value function is non-decreasing and has increasing differences in the buffer state and is non-increasing and has increasing differences in the energy state (cf. Propositions~\ref{prop:structure-PDSV-b} and~\ref{prop:structure-PDSV-e}). By design, relaxing the error tolerance leads to coarser piece-wise planar approximations of the PDS value function. For instance, at approximation error thresholds 0, 10, 20, and 30, the PDS value function is represented by 1089, 18, 14, and 9 states, respectively. The actual maximum errors between the optimal and approximate PDS value functions are 8.3, 17.1 and 27.9.
Interestingly, we can also see that the policies in Fig.~\ref{fig:value-function-and-policy} become more aggressive as we relax the error threshold, i.e., they choose to transmit packets at lower and lower battery states. 
%This can be observed clearly from the behavior of the policies depicted in Figs.~\ref{fig:policy-delta-10},~\ref{fig:policy-delta-20} and~\ref{fig:policy-delta-30} respectively.  

%%%%%%%%%%%%%%%%%%%%%%%%%%%%%%%%%%%%%%%%%%%%%%%%%%%%%%%%%%%%%%%%%%%%%%
%%%%%%%%%%% DESCRIBE POLICY STRUCTURE WITH REASONS %%%%%%%%%%%%%%%%%%%
%%%%%%%%%%%%%%%%%%%%%%%%%%%%%%%%%%%%%%%%%%%%%%%%%%%%%%%%%%%%%%%%%%%%%%

\begin{figure}[!htb]
\centering
  \includegraphics[clip, trim = 1cm 8.5cm 1cm 9cm, width=3.45in]{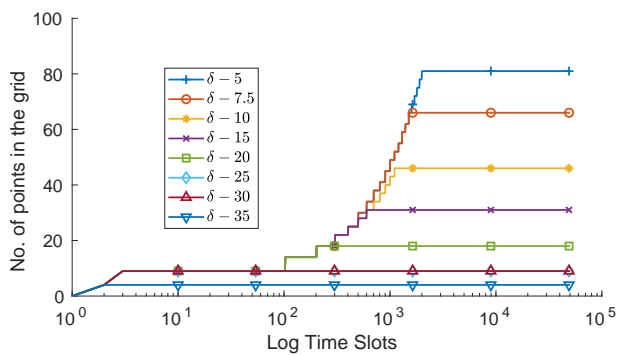}
  \caption{Grid Points vs. Time ($q(h)=0.5$, $P^l(l) \sim \text{Bern}(0.2)$, and $P_{e_H}(e_H) \sim \text{Bern}(0.7)$).}
  \label{fig:grid-pts-vs-time}
  \vspace{-0.5cm}
\end{figure}

% The subdivision of the Quadtree is illustrated in Fig. \ref{fig:quadtree-subdivision} for a threshold error of $\delta = 15$. Fig. \ref{fig:quadtree-subdivision-1} shows the original undivided Quadtree, with the dashed diagonal representing the resulting possible planes if the Quadtree is subdivided. Figs. \ref{fig:quadtree-subdivision-2} and \ref{fig:quadtree-subdivision-3} show the state of the Quadtree after 1 and 4 subdivisions respectively.
% Fig.~\ref{fig:quadtree-subdivision} shows  symmetric subdivisions throughout but it is also possible to have unsymmetrical subdivisions leading to smaller subtrees in a region which needs more refinement.

Fig. \ref{fig:grid-learning-pts-vs-error} illustrates the number of grid points used to approximate the PDS value function versus the approximation error threshold $\delta$ for several data packet arrival rates. The measurements were taken from the approximate PDS value function in channel state $h$ with PLR $q(h) = 0.5$. 
%after 50,000 time slots with grid learning updates applied every $T_{\text{grid}} = 100$ slots. 
%Fig. \ref{fig:grid-learning-pts-vs-error} illustrates the number of points added to the grid to approximate the PDS value function given a certain approximation error threshold, $\delta$. 
These results further highlight that the number of grid points used in the PDS value function approximation decreases as the approximation error threshold increases. This intuitively follows from the fact that higher (resp. lower) error thresholds can be met by coarser (resp. finer) quadtree decompositions. 
%This is intuitive and can be explained by the fact that as the error threshold ($\delta$) increases, the PDS approximation need not be as fine, thus the algorithm does not add many points, if at all, to the grid. Comparing this to the case when the approximation error is small, the algorithm needs to add a lot of points to get the maximum error between points to be within the specified threshold, thus the large number of points in the case of low error threshold. 
We also observe that, for a fixed energy packet arrival rate, the number of grid points needed to meet a given error threshold roughly increases with the data packet arrival rate. This happens because the PDS value function's slope increases with the data packet arrival rate, which results in a larger approximation error at a fixed quadtree decomposition level (cf. Proposition~\ref{prop:bound}).
%Another important observation to make is the effect of increasing the arrival rate to the algorithm's performance in terms of the number of points added. 
For instance, at an expected arrival rate of 0.6 packets/slot (i.e., $P^l(l) \sim \text{Bern}(0.6)$), the number of grid points needed to approximate the PDS value function within an error threshold of $\delta = 5$ is close to 150 points, which is almost twice the number of grid points needed to meet the same error threshold with an expected arrival rate of 0.1 packets/slot. This demonstrates that the grid learning algorithm can adapt to the experienced dynamics.
%This shows how adap the algorithm approximates even in high traffic scenarios. 
%As expected, Fig. \ref{fig:grid-learning-pts-vs-error} illustrates that this behavior is consistent throughout for different values of the packet arrival rate.

Fig.~\ref{fig:grid-pts-vs-time} illustrates how the quadtree decomposition evolves over time to meet different approximation error thresholds. The measurements were taken from the approximate PDS value function in channel state $h$ with PLR $q(h) = 0.5$.
%after 50,000 time slots with grid learning updates applied every $T_{\text{grid}} = 100$ slots.
As before, the terminal number of grid points is lower for higher approximation error thresholds, $\delta$.
% Fig. \ref{fig:grid-pts-vs-time} illustrates how the points are added to the grid to maintain the approximation error bound for different error thresholds ($\delta$); 
% Fig. \ref{fig:grid-pts-vs-time} illustrates the how grid refinement over time due to the addition of new points in the grid for grid update every 100 time slots. 
From the figure, we can see that the grid undergoes a lot of refinement in the first 2000 time slots to meet the error threshold. 
This manifests as a step-wise increase in the number of grid points every $T_{\text{grid}} = 100$ time slots. 
%The algorithm then adds points to the grid every time the maximum approximation error is greater than the threshold, which can be seen in the step-wise nature of the plot. 
Note that, subdividing a leaf node can introduce 1-5 new grid points depending on the refinement level of the surrounding leaf nodes; therefore, the step sizes are variable over time.
%Also, depending on which quadtree gets updated, the number of points added to the grid varies. %Additionally, as expected, the algorithm adds more points to the grid and refines for longer for tighter error thresholds, as evident from Fig. \ref{fig:grid-pts-vs-time}. 

%Fig. \ref{fig:grid-pts-vs-time} illustrates the addition of newer points to the grid to keep the maximum approximation error within the required threshold. From Fig. \ref{fig:grid-pts-vs-time},  we can see that the algorithm adds a large number of points right at the beginning to stay within the threshold as the error threshold, $\delta$, is very tight in this case. The algorithm then adds points to the grid every time the maximum approximation error, $\delta_{max}$ is greater than the threshold, $\delta$. This behavior can be seen in the step-wise nature of the plot.

\begin{figure} [!htb]
\centering
  \includegraphics[clip, trim = 1cm 8.5cm 0cm 8cm, width=3.45in]{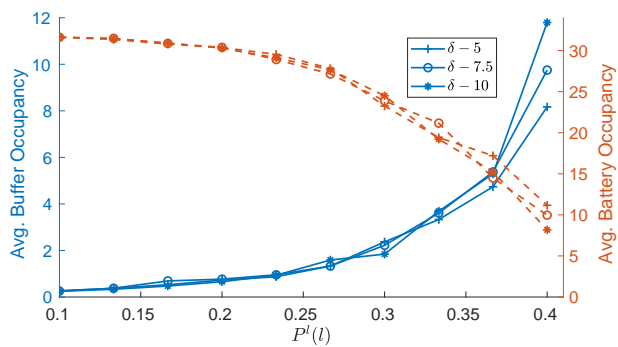} %avg-buffer-avg-energy-latest-pdf.pdf}
  \caption{Average Buffer Occupancy vs. Average Battery Occupancy ($P_{e_H}(e_H) \sim \text{Bern}(0.7)$).}
  \label{fig:grid-learning-avg-buffer-vs-energy}
  \vspace{-0.5cm}
\end{figure}

Fig. \ref{fig:grid-learning-avg-buffer-vs-energy} illustrates the average buffer and battery occupancies versus the data packet arrival rate at three different error thresholds. 
%The averages were taken over 50,000 time slots with grid learning updates applied every $T_{\text{grid}} = 100$ slots.
%Fig. \ref{fig:grid-learning-avg-buffer-vs-energy} illustrates the average buffer occupancy versus the average battery occupancy for packet arrivals for the learning algorithm evaluated at 3 different error thresholds. 
As expected, for a fixed energy packet arrival rate, the average buffer occupancy displays complementary behavior to the average battery occupancy.
This is because, at low data arrival rates, the buffer size can be kept small using a small fraction of the available energy. However, at high data arrival rates, more of the available energy is needed to keep the buffer occupancy from growing. In parallel, as the data arrival rate increases towards the channel's maximum service rate, the average queue backlog increases.  
%This is because lower packet arrival rates correspond to a lower average buffer occupancy thereby preserving energy and leading to a higher battery occupancy; conversely, higher arrival rates lead to a larger queue backlog and thus a higher buffer occupancy while at the same time decreasing the average battery occupancy. 
From Fig. \ref{fig:grid-learning-avg-buffer-vs-energy}, we also observe that tighter error thresholds yield better buffer-battery (and, consequently, delay-energy) trade-offs. For instance, $\delta = 5$ results in a lower average buffer occupancy while maintaining a higher average battery occupancy than $\delta = 10$. This can be explained by the fact that more accurate PDS value function approximations translate to better transmission scheduling policies.

\section{Conclusion}
\label{sec:conclusion}
% Foresighted decision making is required to optimize the performance of resource constrained transmission systems. In practical systems, however, the lack of a priori knowledge about the system's dynamics presents a challenge which various reinforcement learning frameworks aim to solve, circumventing the need for it. To aid in the development of faster and low-complexity online learning, there's a need to exploit the structure of the problem at hand.
Foresighted decision making is required to optimize the performance of resource constrained communication systems. In practice, however, the lack of a priori knowledge about the system's experienced dynamics presents a major challenge.
Online RL represents a natural paradigm for overcoming this challenge, but generic RL algorithms are often unable to meet the stringent requirements of such systems in terms of memory, complexity, and convergence speed. Evidently, this requires carefully exploiting the structure of the problem at hand.

In this paper, we study the structural properties of the DSEHS problem and then leverage them to develop a low-complexity RL algorithm based on value function approximation.
The proposed algorithm allows us to learn an accurate approximation of the optimal value function online, which enables in turn effective minimization of the packet queuing delay given the available harvested energy.
We demonstrate that the proposed algorithm achieves near optimal performance even when the learning updates are carried out intermittently. Moreover, competitive performance is demonstrated relative to a state-of-the-art learning algorithm, at potentially orders of magnitude lower computational complexity. Finally, our framework enables considerable performance gains over the widely used Q-learning algorithm.
%The proposed algorithm allows us to construct an accurate approximation of the true value function that enables in turn effective control of the trade-off between packet queuing delay and harvested battery energy during transmission scheduling of the queued packets over a time-varying channel. We also demonstrate that the proposed algorithm achieves near optimal performance on various performance metrics even when the learning updates are carried out intermittently. Moreover, competitive performance is demonstrated relative to a state-of-the-art VE learning algorithm, at orders of magnitude lower computational complexity. Finally, our framework enables considerable performance gains over a widely popular state-of-the-art low-complexity online learning method, Q-learning.

%As future work, we plan to extend this analysis to multi-user networks and explore equivalent self-coordinated multi-user RL frameworks that exploit the structural properties of the optimal transmission policies that arise therein.

\appendix

\textbf{Proof of Proposition~\ref{prop:well-behaved}.} %\label{prf:well-behaved-algorithm}
The first condition in Definition~\ref{def:well-behaved} is satisfied by assumption. Thus, we only need to show that the PDS Learning algorithm satisfies the second and third conditions.

To simplify the proof, we first introduce some new notation. Using the PDS, we can factor the transition probabilities into known and unknown components, where the known component accounts for the transition from the current state to the PDS, i.e., $s\rightarrow\widetilde{s}$, and the unknown component accounts for the transition from the PDS to the next state, i.e., $\widetilde{s}\rightarrow s^\prime$~\cite{mastronarde2013joint}. Formally,
\begin{equation}\label{eq:factorized-tpf}
P(s'|s,a) = \sum_{\widetilde{s} \in \mathcal{S}} p_u(s'|\widetilde{s})p_k(\widetilde{s}|s,a),
\end{equation}
where the subscripts $k$ and $u$ denote the known and unknown components, respectively. We can factor the cost function similarly: 
\begin{equation}\label{eq:factorized-cost}
c(s,a) = c_k(s,a) + \sum_{\widetilde{s} \in \mathcal{S}}p_k(\widetilde{s}|s,a)c_u(\widetilde{s}).
\end{equation}
In our problem, the known and unknown costs and transition probabilities are defined as:
\begin{align}
c_{k}(s,a) &= b,\label{eq:known_cost} \\
c_{u}(\widetilde{s}) &= \eta\sum\nolimits_{l=0}^{\infty}P^{l}(l)\max(\widetilde{b}+l-N_b,0),\label{eq:unknown_cost} \\
\hspace{-0.1cm}P_{k}(\widetilde{s}|s,a) &= P^{f}(b-\widetilde{b}|a,h) \mathbb{I}_{\{\widetilde{e}=e-a \cdot e_{TX}\}} \mathbb{I}_{\{\widetilde{h}=h\}}, \label{eq:known_tpf} \\
P_{u}(s^\prime|\widetilde{s}) &= P^{l}(b^\prime-\widetilde{b}) P^{e_H}(e^\prime-\widetilde{e}) P^{h}(h^\prime|\widetilde{h}), \label{eq:unknown_tpf}
\end{align}
where $\mathbb{I}_{\{\cdot\}}$ is the indicator function. Note that \eqref{eq:unknown_tpf} is written for the case that $b^\prime < N_b$ and $e^\prime < N_e$. If $b^\prime = N_b$, for instance, then we should use $\sum_{l=N_b-\widetilde{b}}^{\infty}P^{l}(l)$ in \eqref{eq:unknown_tpf}. A similar modification is required for $e_{H}$ in the case that $e^\prime = N_e$. 
Using this new notation, we may rewrite the Bellman equations in~\eqref{eq:V_to_PDSV} and~\eqref{eq:PDSV_to_V} as follows:
\begin{align}
\widetilde{V}^{*}(\widetilde{s}) &= c_{u}(\widetilde{s}) + \gamma\sum_{s^\prime \in \mathcal{S}}P_{u}(s^\prime | \widetilde{s})V^{*}(s^\prime) \label{eq:PDSV_to_V_2} \\ 
V^{*}(s) &= \min_{a \in \mathcal{A}} \biggl\{c_{k}(s, a)+\sum_{\widetilde{s} \in \mathcal{S}} P_{k}(\widetilde{s} | s, a) \widetilde{V}^{*}(\widetilde{s})\biggr\} \label{eq:V_to_PDSV_2}
\end{align}

Plugging~\eqref{eq:V_to_PDSV_2} into~\eqref{eq:PDSV_to_V_2}, we can define a mapping $H_{PDS}$ that maps a $\widetilde{V}$-vector to a new $\widetilde{V}$-vector $H_{PDS}\widetilde{V}$ according to the formula
\begin{equation} \label{eq:H_PDS}
(H_{PDS} \widetilde{V})(\widetilde{s}) = c_{u}(\widetilde{s}) + \gamma \sum_{s^\prime \in \mathcal{S}} P_{u}(s^\prime | \widetilde{s}) \min_{a \in \mathcal{A}} \biggl\{c_{k}(s^\prime, a) + \sum_{\widetilde{s}^\prime \in \mathcal{S}} P_{k}(\widetilde{s}^\prime | s^\prime, a) \widetilde{V}(\widetilde{s}^\prime)\biggr\},
\end{equation}
where $\widetilde{s}$, $s^\prime$, and $\widetilde{s}^\prime$ denote the current PDS, next state, and next PDS, respectively.
Now, we can rewrite the PDS learning update in~\eqref{eq:pds-update} using the mapping $H_{PDS}$:
\begin{equation}\label{eq:pds-update-2}
\widetilde{V}^{n + 1}(\widetilde{s}^n) = (1 - \beta^n) \widetilde{V}^n(\widetilde{s}^n) + \beta^n \big[
(H_{PDS} \widetilde{V}^n)(\widetilde{s}^n) + w^n(\widetilde{s}^n)\big],
\end{equation}
where
\begin{equation}\label{eq:noise}
w^n(\widetilde{s}^n) = \eta \max(\widetilde{b}^n + l^n - N_b, 0) + \gamma V^{n}(s^{n + 1}) - \biggl[c_{u}(\widetilde{s}^n) + \gamma\sum_{s^\prime \in \mathcal{S}}P_{u}(s^\prime | \widetilde{s}^n)V^n(s^\prime)\biggr].
\end{equation}
For any history $F^n$, it is easy to show that $E[w^n(\widetilde{s}^n) | F^n] = 0$ and $|w^n(\widetilde{s}^n) | \leq V_{\max}$, where $V_{\max} = \max\{c(s, a)\} / (1 - \gamma)$. 

Now, we only need to show that the mapping $H_{PDS}$ satisfies the contraction property:
%\begin{equation*}
\begin{align*}
& \left|(H_{PDS}\widetilde{V})(\widetilde{s}) - \widetilde{V}^*(\widetilde{s})\right| \\
& = \gamma \sum_{s^\prime \in \mathcal{S}} P_{u}(s^\prime | \widetilde{s}) \left|V(s^\prime) - V^*(s^\prime)\right| \\
& = \gamma \sum_{s^\prime \in \mathcal{S}} P_{u}(s^\prime | \widetilde{s}) \left| \min_{a \in \mathcal{A}} \biggl\{c_k(s^\prime, a) + \sum_{\widetilde{s}^\prime \in \mathcal{S}} P_k(\widetilde{s}^\prime | s^\prime, a) \widetilde{V}(\widetilde{s}^\prime)\biggr\} - \min_{a \in \mathcal{A}} \biggl\{c_k(s^\prime, a) + \sum_{\widetilde{s}^\prime \in \mathcal{S}} P_k(\widetilde{s}^\prime | s^\prime, a) \widetilde{V}^{*}(\widetilde{s}^\prime)\biggr\}\right| \\
& \leq \gamma \sum_{s^\prime \in \mathcal{S}} P_{u}(s^\prime | \widetilde{s})  \max_{a \in \mathcal{A}} \left|\sum_{\widetilde{s}^\prime \in \mathcal{S}} P_k(\widetilde{s}^\prime | s^\prime, a) \widetilde{V}(\widetilde{s}^\prime) - \sum_{\widetilde{s}^\prime \in \mathcal{S}} P_k(\widetilde{s}^\prime | s^\prime, a) \widetilde{V}^*(\widetilde{s}^\prime)\right| \\
& = \gamma \sum_{s^\prime \in \mathcal{S}} P_{u}(s^\prime | \widetilde{s}) \max_{a \in \mathcal{A}} \sum_{\widetilde{s}^\prime \in \mathcal{S}} P_k(\widetilde{s}^\prime | s^\prime, a) \left|\left(\widetilde{V}(\widetilde{s}^\prime) - \widetilde{V}^*(\widetilde{s}^\prime)\right)\right| \\
& \leq \gamma \sum_{s^\prime \in \mathcal{S}} P_{u}(s^\prime | \widetilde{s}) \max_{a \in \mathcal{A}} \sum_{\widetilde{s}^\prime \in \mathcal{S}} P_k(\widetilde{s}^\prime | s^\prime, a) 
||{ \widetilde{V} - \widetilde{V}^*}|| \\
& = \gamma ||{\widetilde{V} - \widetilde{V}^*}||,
\end{align*}
%\end{equation*}
where the first and second equalities follow by applying the definition of $(H_{PDS}\widetilde{V})(\widetilde{s})$ (see~\eqref{eq:H_PDS}); the first inequality follows from the fact that the difference of minimums is less than the maximum of differences; the third equality follows by rearranging terms; the final inequality follows by definition of the $L_\infty$ norm; and the last equality follows from the fact that $||{\widetilde{V} - \widetilde{V}^*}||$ does not depend on the summation variables $s^\prime$ and $\widetilde{s}^\prime$, and $P_{u}(s^\prime | \widetilde{s})$ and $P_k(\widetilde{s}^\prime | s^\prime, a)$ sum to 1.

\section{Conclusion}
\label{sec:con}

\balance
\bibliographystyle{IEEEtran}
\bibliography{refs}

\end{document}